\newtheorem{theorem}{Theorem}[section]
\newtheorem{lemma}[theorem]{Lemma}
\newtheorem{corollary}[theorem]{Corollary}
\newtheorem{definition}[theorem]{Definition}
\def\ket#1{\left\vert#1\right\rangle}
\renewcommand{\>}{\rangle}
\renewcommand{\(}{\left(}
\renewcommand{\)}{\right)}
\def\braket#1#2{{{\langle}#1\vert}#2\rangle}
\def\abs#1{| #1 |}
\def\spn{\mbox{span}} 
\newcommand{\eps}{\varepsilon}
\newcommand{\zero}{{0}}
\renewcommand{\overrightarrow}[1]{\mathaccent"017E #1} % saves a bit of space
\renewcommand{\emptyset}{\varnothing}
\newcommand{\defeq}{\colonequals}
\def\Garbage{\psi}
\def\colset{\mathcal{P}}
\def\cols{\mathcal{P}}
\def\inds{\mathcal{I}}
\def\inp{\chi}
\begin{document}

\title{A Time-Efficient Quantum Walk for \\ 3-Distinctness Using Nested Updates\thanks{Support for this work was provided by NSERC, the Ontario Ministry of Research and Innovation, the US ARO, the French ANR Blanc project ANR-12-BS02-005 (RDAM), and the European Commission IST STREP project 25596 (QCS).}}
\author[1,3]{Andrew M.~Childs\thanks{amchilds@uwaterloo.ca}}
\author[2,3]{Stacey Jeffery\thanks{sjeffery@uwaterloo.ca}}
\author[2,3]{Robin Kothari\thanks{rkothari@cs.uwaterloo.ca}}
\author[4]{Fr\'ed\'eric Magniez\thanks{frederic.magniez@univ-paris-diderot.fr}}
\affil[1]{\small Department of Combinatorics \& Optimization, University of Waterloo, Canada}
\affil[2]{\small David R.\ Cheriton School of Computer Science, University of Waterloo, Canada}
\affil[3]{Institute for Quantum Computing, University of Waterloo, Canada}
\affil[4]{\small CNRS, LIAFA, Univ Paris Diderot, Sorbonne Paris-Cit\'e, France}

\date{}

\maketitle

\begin{abstract}
We present an extension to the quantum walk search framework that facilitates quantum walks with nested updates.
We apply it to give a quantum walk algorithm 
for $3$-Distinctness with query complexity $\tilde{O}(n^{5/7})$, matching the best known upper bound (obtained via learning graphs) up to log factors. Furthermore, our algorithm has time complexity $\tilde{O}(n^{5/7})$, improving the previous $\tilde{O}(n^{3/4})$.
\end{abstract}

\section{Introduction}
\noindent Element Distinctness is a basic computational problem. 
Given a sequence $\inp=\inp_1,\ldots,\inp_n$ of $n$ integers, the task is to decide if those elements are pairwise distinct.
This problem is closely related to Collision, a fundamental problem in cryptanalysis. 
Given a $2$-to-$1$ function $f:[n]\to [n]$, the aim is to find $a\neq b$ such that $f(a)=f(b)$.
One of the best (classical and quantum) algorithms is to run Element Distinctness on $f$ restricted to a random subset of size
$\sqrt{n}$.

In the quantum setting, Element Distinctness 
has received a lot of attention.
The first non-trivial algorithm used $\tilde{O}(n^{3/4})$ time~\cite{bdh05}. The optimal $\tilde{O}(n^{2/3})$ algorithm is due to
Ambainis~\cite{amb04}, who introduced an approach based on quantum walks that has become a major tool for quantum query algorithms.
The optimality of this algorithm follows from a query lower bound for Collision~\cite{AS04}.
In the query model, access to the input $\inp$ is provided by an oracle whose answer to query $i\in[n]$ is $\inp_i$.
This model is the quantum analog of classical 
decision tree complexity: the only resource measured is the number of queries to the input. 

Quantum query complexity has been a very successful model for studying the power of quantum computation. In particular, quantum query complexity has been exactly characterized in terms of a semidefinite program, the general adversary bound~\cite{rei11,LMR+11}. To design quantum query algorithms, it suffices to exhibit a solution to this semidefinite program. However, this turns out to be difficult in general, as the minimization form of the general adversary bound has exponentially many constraints. 
Belovs~\cite{bel11} recently introduced the model of learning graphs, which can be viewed as the minimization form of the general adversary bound with additional structure imposed on the form of the solution.  This additional structure makes learning graphs much easier to reason about. 
The learning graph model
has already been used to improve the query complexity of many graph problems~\cite{bel11,lms11,LMS13} 
as well as $k$-Distinctness \cite{bel12}.

One shortcoming of 
learning graphs is that these upper bounds 
do not lead explicitly to efficient algorithms in terms of time complexity. 
Although the study of query complexity is interesting on its own, it is relevant in practice only when a query lower bound is close to the best known time complexity. 

Recently, \cite{JKM12} reproduced several known learning graph upper bounds via explicit algorithms in an extension of the quantum walk search framework of \cite{MNRS11}.  
This work produced a new 
quantum algorithmic tool, quantum walks with nested checking. 
Algorithms constructed in the framework of \cite{JKM12} can be interpreted as quantum analogs of randomized algorithms, so they are simple to design and analyze for any notion of cost, including time as well as query complexity. This framework
has interpreted all known learning graphs as quantum walks,
except the very recent \emph{adaptive learning graphs} for $k$-Distinctness~\cite{bel12}.

In $k$-Distinctness, the problem is to decide if there are $k$ copies of the same element in the input,
with $k=2$ being Element Distinctness.
The best lower bound for $k$-Distinctness is the Element Distinctness lower bound $\Omega(n^{2/3})$,
whereas the best query upper bound is $O(n^{1-2^{k-2}/(2^k-1)})=o(n^{3/4})$~\cite{bel12}, achieved using learning graphs, improving the previous bound of $O(n^{k/(k+1)})$ \cite{amb04}. However, the best known time complexity remained $\tilde{O}(n^{k/(k+1)})$. We improve this upper bound for the case when $k=3$.

Our algorithm for $3$-Distinctness is conceptually simple: we walk on sets of $2$-collisions and look for a set containing a $2$-collision that is part of a $3$-collision. We check if a set has this property by searching for an index that evaluates to the same value as one of the $2$-collisions in the set. However, to move to a new set of $2$-collisions, we need to use a quantum walk subroutine for finding $2$-collisions as part of our update step. This simple idea is surprisingly difficult to implement and leads us to develop a new extension of the quantum walk search framework.

Given a Markov chain $P$ with spectral gap $\delta$ and success probability $\eps$ in its stationary distribution,
one can construct a quantum search algorithm 
with cost $\textstyle\mathsf{S}+\frac{1}{\sqrt{\eps}}(\frac{1}{\sqrt{\delta}}\mathsf{U}+\mathsf{C})$~\cite{MNRS11},
where  $\mathsf{S}$, $\mathsf{U}$ and $\mathsf{C}$ are respectively the setup, update, and checking costs
of the quantum analog of $P$.
Using a quantum walk algorithm with costs $\mathsf{S}',\mathsf{U}',\mathsf{C}',\eps',\delta'$ (as in \cite{MNRS11}) as a checking subroutine straightforwardly gives complexity
$\textstyle\mathsf{S}+\frac{1}{\sqrt{\eps}}(\frac{1}{\sqrt{\delta}}\mathsf{U}+\mathsf{S}'+\frac{1}{\sqrt{\eps'}}(\frac{1}{\sqrt{\delta'}}\mathsf{U}'+\mathsf{C}'))$.
Using nested checking \cite{JKM12}, the cost can be reduced to
$\textstyle\mathsf{S}+\mathsf{S}'+\frac{1}{\sqrt{\eps}}(\frac{1}{\sqrt{\delta}}\mathsf{U}+\frac{1}{\sqrt{\eps'}}(\frac{1}{\sqrt{\delta'}}\mathsf{U}'+\mathsf{C}')).$

It is natural to ask if a quantum walk subroutine can be used for the update step in a similar manner
 to obtain cost 
$\textstyle\mathsf{S}+\mathsf{S}'+\frac{1}{\sqrt{\eps}}(\frac{1}{\sqrt{\delta}}\frac{1}{\sqrt{\eps'}}(\frac{1}{\sqrt{\delta'}}\mathsf{U}'+\mathsf{C}')+\mathsf{C}).$
In most applications, the underlying walk is independent of the input, so the update operation is simple, but for some applications a more complex update may be useful (as in \cite{ck11}, where Grover search is used for the update).
In Section~\ref{sec:motiv}, 
we describe an example showing that it is not even clear how to use a nested quantum walk for the update with the seemingly trivial cost 
$\textstyle\mathsf{S}+\frac{1}{\sqrt{\eps}}(\frac{1}{\sqrt{\delta}}(\mathsf{S}'+\frac{1}{\sqrt{\eps'}}(\frac{1}{\sqrt{\delta'}}\mathsf{U}'+\mathsf{C}'))+\mathsf{C})$.
Nevertheless, despite the difficulties that arise in implementing nested updates, we show in Section~\ref{sec:rec} how to achieve the more desirable cost expression in certain cases, and a similar one in general. 

To accomplish this, we extend the quantum walk search framework by introducing the concept of \emph{coin-dependent data}.  This allows us to implement nested updates, with a quantum walk subroutine to carrying out the update procedure. Superficially, our modification appears small.
Indeed, the proof of the complexity of our framework is nearly the same as that of \cite{MNRS11}.
However, there are some subtle differences in the implementation of the walk.

As in \cite{JKM12}, this concept is simple yet powerful.  We demonstrate this by constructing a quantum walk version of the learning graph for $3$-Distinctness with matching query complexity (up to poly-logarithmic factors). Because quantum walks are easy to analyze, the time complexity, which matches the query complexity, follows easily, answering an open problem of~\cite{bel12}. 

Independently, Belovs \cite{bel13} also recently obtained a time-efficient implementation of his learning graph for $3$-Distinctness. His approach also uses quantum walks, but beyond this similarity, the algorithm appears quite different.  In particular, it is based on another framework of search via quantum walk due to Szegedy~\cite{sze04,mnrs12}, whereas our approach uses a new extension of the quantum walk search framework of \cite{MNRS11}.

\section{Preliminaries and Motivation}

\subsection{Quantum Walks}\label{app:qw}

Consider a reversible, ergodic Markov chain $P$ on a connected, undirected graph $G=(X,E)$
with spectral gap $\delta>0$ and stationary distribution $\pi$. 
Let $M\subseteq X$ be a set of marked vertices. Our goal is to detect wether $M=\emptyset$ or $\Pr_{x\sim\pi} (x\in M) \geq \varepsilon$,
for some given $\varepsilon>0$.
Consider the following randomized algorithm that finds a marked element with bounded error.
\begin{quote}
\begin{enumerate}\item Sample~$x$ from $\pi$
\item Repeat for~$\Theta(1/\eps)$ steps
\begin{enumerate}
\item\label{item-check} If the current vertex~$x$  is marked,
  then stop and output~$x$
\item Otherwise, simulate~$\Theta(1/\delta)$ steps of $P$ starting with $x$
\end{enumerate}
\item If the algorithm has not terminated,  output `no marked
  element'
\end{enumerate}
\end{quote}

This algorithm has been quantized by~\cite{MNRS11} leading to efficient quantum query algorithms. 
Since each step has to be unitary and therefore reversible, we have to implement the walk carefully.
The quantization considers $P$ as a walk on edges of $E$. 
We write $(x,y)\in \overrightarrow{E}$ when we consider an edge $\{x,y\}\in E$ with orientation $(x,y)$.
The notation $(x,y)$ intuitively means that the current vertex of the walk is $x$ and the \emph{coin}, indicating the next move, is $y$. Swapping $x$ and $y$ changes the current vertex to $y$; then the coin becomes $x$.

The quantum algorithm may carry some data structure while walking on $G$; we formalize this as follows.
Let $\zero$ be a state outside $X$.
Define ${D}:X\cup\{0\}\rightarrow \mathcal{D}$ for some Hilbert space $\mathcal{D}$, with $\ket{{D}(0)}=\ket{0}$. 
We define costs associated with the main steps of the algorithm. By cost we mean any measure of complexity such as query, time or space.

\begin{description}
\item[\emph{Setup cost}:] Let $\mathsf{S}$ be the cost of constructing
\[
\ket{\pi}=\sum_{x\in X}\sqrt{\pi(x)}\ket{x}\ket{D(x)}\sum_{y\in X}\sqrt{P(x,y)}\ket{y}\ket{{D}(y)}.
\]
\item[\emph{Update cost}:] Let $\mathsf{U}$ be the cost of the \textsc{Local Diffusion} operation, which is controlled on the first two registers\footnote{The requirement that this operation be controlled on the first two registers, i.e., that it always leaves the first two registers unchanged, is not explicitly stated in \cite{MNRS11}. However, this condition is needed to prevent, for example, the action $\ket{x}\ket{\psi}\ket{0,0}\mapsto \ket{x}\ket{D(x)}\ket{\phi}$, where $\braket{\psi}{D(x)}=0$ and $\ket{x}\ket{D(x)}\ket{\phi}$ is a possible state of the algorithm. In this case, $(\textsc{Local Diffusion})\text{ref}_{\ket{0,0}}(\textsc{Local Diffusion})^\dagger$ would not act as $W(P)$ on $\ket{x}\ket{D(x)}\ket{\phi}$.} and acts as
\[
\ket{x}\ket{{D}(x)}\ket{0}\ket{D(0)}\mapsto \ket{x}\ket{D(x)}\sum_{y\in X}\sqrt{P(x,y)}\ket{y}\ket{{D}(y)}.
\]
\item[\emph{Checking cost}:] Let $\mathsf{C}$ be the cost of the reflection
\[
\ket{x}\ket{{D}(x)}\mapsto \begin{cases}
-\ket{x}\ket{{D}(x)} & \mbox{if }x\in M\\
\ket{x}\ket{{D}(x)} & \mbox{otherwise.}\end{cases}
\]
\end{description}

\begin{theorem}[\cite{MNRS11}]
Let $P$ be a reversible, ergodic Markov chain on $G=(X,E)$ with spectral gap $\delta>0$. 
Let $M\subseteq X$ be such that $\Pr_{x\sim \pi}(x\in M)\geq \eps$, for some $\eps>0$, whenever $M\neq\emptyset$. 
Then there is a quantum algorithm that finds an element of $M$, if $M \ne \emptyset$,
with bounded error and with cost
$$\textstyle O\left(\mathsf{S}+\frac{1}{\sqrt{\eps}}\left(\frac{1}{\sqrt{\delta}}\mathsf{U}+\mathsf{C}\right)\right).$$
\end{theorem}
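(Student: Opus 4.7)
The plan is to quantize the randomized algorithm described above by building a unitary analog of the walk $P$ on the edge space, then combining it with the checking reflection through amplitude amplification. Concretely, I would work in the Hilbert space spanned by $\ket{x}\ket{D(x)}\ket{y}\ket{D(y)}$ for $(x,y)\in\overrightarrow{E}\cup\{(0,0)\}$ and define the Szegedy walk operator $W(P) = R_B \cdot R_A$, where $R_A$ is the reflection about the span of states of the form $\ket{x}\ket{D(x)}\sum_y\sqrt{P(x,y)}\ket{y}\ket{D(y)}$ and $R_B$ is the analogous reflection with the roles of the two halves swapped. Using the \textsc{Local Diffusion} operator together with a swap of the two edge endpoints, each application of $W(P)$ can be implemented at cost $O(\mathsf{U})$, and the setup prepares precisely the $+1$ eigenvector $\ket{\pi}$ of $W(P)$ at cost $\mathsf{S}$.

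The central spectral fact I would invoke (or reprove via Szegedy's correspondence between singular values of $P$ and eigenvalues of $W(P)$) is that all other eigenvalues of $W(P)$ have phase at least $\Omega(\sqrt{\delta})$ away from $0$. I would then implement an \emph{approximate reflection} about $\ket{\pi}$ by running phase estimation on $W(P)$ with precision $\Theta(\sqrt{\delta})$, flipping a phase conditioned on the estimated phase being nonzero, and uncomputing the phase estimation. Standard analysis shows this yields an operator that is inverse-polynomially close to $2\ket{\pi}\!\bra{\pi}-I$ at cost $O(\mathsf{U}/\sqrt{\delta})$. The checking reflection already has cost $\mathsf{C}$ by assumption.

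With these two reflections in hand, the outer structure is amplitude amplification on the initial state $\ket{\pi}$ viewed as a superposition of marked and unmarked parts. Since $\Pr_{x\sim\pi}(x\in M)\ge\varepsilon$, the marked component has amplitude at least $\sqrt{\varepsilon}$, so $O(1/\sqrt{\varepsilon})$ alternations of the approximate $\ket{\pi}$-reflection and the checking reflection rotate the state into the marked subspace with constant probability, and measuring the vertex register yields an element of $M$. Summing the costs gives exactly $O\!\left(\mathsf{S}+\tfrac{1}{\sqrt{\varepsilon}}\bigl(\tfrac{1}{\sqrt{\delta}}\mathsf{U}+\mathsf{C}\bigr)\right)$.

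The main obstacle I expect to wrestle with is the error accumulation from using an \emph{approximate} reflection rather than an exact one: each of the $\Theta(1/\sqrt{\varepsilon})$ amplification steps introduces some error, and one must choose the phase-estimation precision so that the total error is $o(1)$ while keeping the cost per step at $O(\mathsf{U}/\sqrt{\delta})$. A related subtlety, highlighted in the footnote of the excerpt, is ensuring that \textsc{Local Diffusion} acts as the identity on the first two registers so that the composed reflection $R_B$ (built via \textsc{Local Diffusion}, a swap, and a reflection about $\ket{0}\ket{D(0)}$ on the last two registers) really agrees with the Szegedy walk on all states reachable during amplification. Once both points are handled, the rest of the argument is bookkeeping of costs.
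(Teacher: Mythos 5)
Your proposal is correct and follows essentially the same route as the proof in \cite{MNRS11}, which the paper cites rather than reproves: quantize $P$ on the edge space via the \textsc{Local Diffusion} and swap, use the $\Omega(\sqrt{\delta})$ phase gap of $W(P)$ together with phase estimation to build an approximate reflection about $\ket{\pi}$ at cost $O(\mathsf{U}/\sqrt{\delta})$, and alternate it with the checking reflection in $O(1/\sqrt{\eps})$ rounds of amplitude amplification. You also correctly identify the two genuine subtleties (error accumulation from the approximate reflection, and the requirement that \textsc{Local Diffusion} fix the first two registers), both of which are handled exactly as you describe in the original argument and in the paper's adaptation of it to coin-dependent data in Theorem 3.1.
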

Furthermore, we can approximately map $\ket{\pi}$ to $\ket{\pi(M)}$, the normalized projection of $\ket{\pi}$ onto $\spn\{\ket{x}\ket{D(x)}\ket{y}\ket{D(y)}:x\in M,y\in X\}$, in cost $\frac{1}{\sqrt{\eps}}(\frac{1}{\sqrt{\delta}}\mathsf{U}+\mathsf{C})$.

\subsection{$3$-Distinctness}\label{sec:prelim-3dist}
We suppose that the input is a sequence $\inp=\inp_1,\ldots,\inp_n$ of integers from $[q] \defeq \{1,\ldots,q\}$.
We model the input as an oracle whose answer to query $i\in[n]$ is $\inp_i$.

We make the simplifying assumptions that there is at most one $3$-collision and that the number of $2$-collisions is in $\Theta(n)$. The first assumption is justified in \cite[Section 5]{amb04}. To justify the second assumption, note that given an input $\inp\in [q]^n$, we can construct $\inp'\in [q+n]^{3n}$ with the same $3$-collisions as $\inp$, 
and $\Omega(n)$ $2$-collisions, by defining $\inp'_i=\inp_i$ for $i\in [n]$ and $\inp_i'=\inp_{i+n}'=q+i$ for $i\in \{n+1,\dots,2n\}$. Note that any two $2$-collisions not both part of the $3$-collision are disjoint.

A common simplifying technique is to randomly partition the space $[n]$ and assume that the solution respects the partition in some sense. Here we partition the space into three disjoint sets of equal size, $A_1$, $A_2$ and $A_3$, 
and assume that if there is a $3$-collision $\{i,j,k\}$, then we have $i\in A_1$, $j \in A_2$ and $k\in A_3$. This assumption holds with constant probability, so we need only repeat the algorithm $O(1)$ times with independent choices of the tripartition to find any $3$-collision with high probability.
Thus, we assume we have such a partition.

\subsection{Motivating Example}\label{sec:motiv}

\paragraph{Quantum Walk for Element Distinctness.} In the groundbreaking work of Ambainis 
\cite{amb04}, which inspired a series of quantum walk frameworks \cite{sze04,MNRS11,JKM12} leading up to this work, 
a quantum walk for solving Element Distinctness was presented. This walk takes place on a Johnson graph, $J(n,r)$, whose vertices are subsets of $[n]$ of size $r$, denoted $\binom{[n]}{r}$. In $J(n,r)$, two vertices $S,S'$ are adjacent if $\abs{S\cap S'}=r-1$. 
The data function is ${D}(S)=\{(i,\inp_i):i\in S\}$. 
The diffusion step of this walk acts as
$$\textstyle\ket{S}\ket{{D}(S)}\ket{0}\mapsto \ket{S}\ket{D(S)}\frac{1}{\sqrt{r(n-r)}}\sum_{i\in S,j\in [n]\setminus S}\ket{(S\setminus i)\cup j}\ket{{D}((S\setminus i)\cup j)}.$$
\noindent We can perform this diffusion in two queries by performing the transformation
$$\textstyle\ket{S}\ket{{D}(S)}\ket{0}\mapsto \ket{S}\ket{{D}(S)}\frac{1}{\sqrt{r}}\sum_{i\in S}\ket{(i,\inp_i)}\frac{1}{\sqrt{n-r}}\sum_{j\in [n]\setminus S}\ket{(j,\inp_j)}.$$
\noindent We can reversibly map this to the desired state with no queries, and by using an appropriate encoding of ${D}$, we can make this time efficient as well. 

To complete the description of this algorithm, we describe the marked set and checking procedure. We deviate slightly from the usual quantum walk algorithm of \cite{amb04} and instead describe a variation that is 
analogous to the learning graph for Element Distinctness \cite{bel11}. We say a vertex $S$ is marked if it contains an index $i$ such that there exists $j\in[n]\setminus\{i\}$ with $\inp_i=\inp_j$ (whereas in \cite{amb04} both $i$ and $j$ must be in $S$). To check if $S$ is marked, we simply search over $[n]\setminus S$ for such a $j$, in cost $O(\sqrt{n})$. This does not give asymptotically better performance 
than \cite{amb04}, but it is more analogous to the $3$-Distinctness algorithm we  attempt to construct in the remainder of this section, and then succeed in constructing in Section \ref{sec:query}.

\paragraph{Attempting a Quantum Walk for $3$-Distinctness.} We now attempt to construct an analogous algorithm for $3$-Distinctness. 
Conceptually, the approach is simple, but successfully implementing the simple idea is nontrivial. 
The idea is to walk on a Johnson graph of sets of \emph{collision pairs}, analogous to the set of queried indices in the Element Distinctness walk described above.  The checking step is then similar to that of the above walk: simply search for a third element that forms a $3$-collision with one of the $2$-collisions in the set. For the update step, we need to replace one of the collision pairs in the set using a subroutine that finds a $2$-collision. We now describe the difficulty of implementing this step efficiently, despite having an optimal Element Distinctness algorithm at our disposal. Section \ref{sec:qw-nested-coin} presents a framework that allows us to successfully implement the idea in Section \ref{sec:query}.

Let $\colset$ denote the set of collision pairs in the input, and $n_2=\abs{\colset}$. We walk on $J(n_2,s_2)$, with each vertex $S_2$ corresponding to a set of $s_2$ collision pairs. 
The diffusion for this walk is the map
$$\ket{S_2,{D}(S_2)}\ket{0}\mapsto
\textstyle\ket{S_2,D(S_2)}\tfrac{1}{\sqrt{r(n_2-s_2)}}\sum_{\substack{(i,i')\in S_2\\(j,j')\in \colset\setminus S_2}}\ket{(S_2\setminus (i,i'))\cup (j,j')}\ket{{D}((S_2\setminus (i,i'))\cup (j,j'))}. 
$$
\noindent To accomplish this, we need to generate $\frac{1}{\sqrt{s_2}}\sum_{(i,i')\in S_2}\ket{(i,i',\inp_i)}$ and $\frac{1}{\sqrt{n_2-s_2}}\sum_{(j,j')\in\colset\setminus S_2}\ket{(j,j',\inp_j)}$. The first superposition is easy to generate, since we have $S_2$, but the second is more difficult since we have to find new collisions.

The obvious approach is to use the quantum walk algorithm for Element Distinctness as a subroutine. However, this algorithm does not return the desired superposition over collisions; rather, it returns 
a superposition over sets that contain a collision.
That is, we have the state 
$\frac{1}{\sqrt{n_2}}\sum_{(i,i')\in\colset}\ket{(i,i',\inp_i)}\ket{\Garbage(i,i')}$ for some garbage $\ket{\Garbage(i,i')}$.
The garbage may be only slightly entangled with $(i,i')$, 
but even this small amount of error in the state is prohibitive.  Since we must call the update subroutine many times, we need the error to be very small.  Unlike for nested checking, where bounded-error subroutines are sufficient, we cannot amplify the success probability of an update operator.
We cannot directly use the state returned by the Element Distinctness algorithm for several reasons. First, we cannot append garbage each time we update, as
this would prevent proper interference in the walk. Second, when we use a nested walk for the update step, we would like to use the same trick as in nested checking: putting a copy of the starting state for the nested walk in the data structure so that we only need to perform the inner setup once. To do the same here, we would need to preserve the inner walk starting state; in other words, the update would need to output some state close to
$\binom{n}{s_1}^{-1/2}\sum_{S_1\in\binom{[n]}{s_1}}\ket{S_1}$.
While we might try to recycle the garbage to produce this state,
it is unclear how to extract the part we need for the update coherently, let alone without damaging the rest of the state. 

This appears to be a problem for any approach that directly uses a quantum walk for the update, since all known quantum walks use some variant of a Johnson graph.
Our modified framework circumvents this issue by allowing us to do the update with some garbage, which we then uncompute. This lets us use a quantum walk subroutine, with setup performed only at the beginning of the algorithm, to accomplish the update step. More generally, using our modified framework, we can tolerate updates that have garbage for any reason, whether the garbage is the result of the update being implemented by a quantum walk, or by some other quantum subroutine.

\section{Quantum Walks with Nested Updates}\label{sec:qw-nested-coin}
\subsection{Coin-Dependent Data}\label{sec-coin}

A quantum analog of a discrete-time random walk on a graph can be constructed as a unitary process on the directed edges. For an edge $\{x,y\}$, we may have a state $\ket{x}\ket{y}$, where $\ket{x}$ represents the current vertex and $\ket{y}$ represents the \emph{coin} or next vertex. In the framework of \cite{MNRS11}, some data function on the vertices is employed to help implement the search algorithm. 
We modify the quantum walk framework to allow this data 
to depend on both the current vertex and the coin, so that it is a function of the directed edges, which seems natural in hindsight. We show that this point of view has algorithmic applications.
In particular, this modification enables efficient nested updates. 

In the rest of the paper, let $P$ be a reversible, ergodic
Markov chain on a connected, undirected graph $G=(X,E)$
with stationary distribution $\pi$ and
spectral gap $\delta>0$.

 Let $\zero\not\in X$. Let ${D}:(X\times \{\zero\}) \cup \overrightarrow{E}\rightarrow \mathcal{D}$ 
for some Hilbert space $\mathcal{D}$. 
A quantum analog of $P$ with coin-dependent data structures can be implemented using three operations, as in \cite{MNRS11}, but the update now has three parts.
The first corresponds to \textsc{Local Diffusion} from the framework of~\cite{MNRS11}, as described in
Section~\ref{app:qw}. The others are needed because of the new coin-dependent data.

\begin{description}
\item[\emph{Update cost}:] Let $\mathsf{U}$ be the cost of implementing
\begin{itemize}
\item \textsc{Local Diffusion}:
$\ket{x,0}\ket{{D}(x,0)}\mapsto \sum_{y\in X}\sqrt{P(x,y)}\ket{x,y}\ket{{D}(x,y)}$
$\forall\, x\in X$;
\item The $(X,0)$-\textsc{Phase Flip}:
$\ket{x,0}\ket{{D}(x,0)}\mapsto - \ket{x,0}\ket{{D}(x,0)}$ $\forall\, x \in X$, and the identity on the orthogonal subspace; and
\item The \textsc{Database Swap}:
$\ket{x,y}\ket{{D}(x,y)}\mapsto \ket{y,x}\ket{{D}(y,x)}$
$\forall\, (x,y)\in \overrightarrow{E}$.
\end{itemize}
\end{description}

\noindent By cost, we mean any desired measure of complexity such as queries, time, or space. We also naturally extend the setup and checking costs as follows, where $M\subseteq X$ is a set of marked vertices.

\begin{description}
\item[\emph{Setup cost}:] Let $\mathsf{S}$ be the cost of constructing
\[\ket{\pi}\defeq\sum_{x\in X}\sqrt{\pi(x)}\sum_{y\in X}\sqrt{P(x,y)}\ket{x,y}\ket{{D}(x,y)}.
\]
\item[\emph{Checking cost}:] Let $\mathsf{C}$ be the cost of the reflection
\[\ket{x,y}\ket{{D}(x,y)}\mapsto\begin{cases}
-\ket{x,y}\ket{{D}(x,y)} & \mbox{if }x\in M,\\
\ket{x,y}\ket{{D}(x,y)} & \mbox{otherwise,}\end{cases}
~\forall\, (x,y)\in\overrightarrow{E}.
\]
\end{description}

Observe that $\ket{\pi}^0\defeq\sum_{x\in X} \sqrt{\pi(x)}\ket{x,0}\ket{D(x,0)}$ can be mapped to $\ket{\pi}$ by the \textsc{Local Diffusion}, which has cost $\mathsf{U}<\mathsf{S}$, so we can also consider $\mathsf{S}$ to be the cost of constructing $\ket{\pi}^0$.

\begin{theorem}\label{thm:coin}
Let $P$ be a 
 Markov chain on $G=(X,E)$ 
with spectral gap $\delta>0$, and let $D$ be a \emph{coin-dependent} data structure for $P$.
Let $M\subseteq X$ satisfy $\Pr_{x\sim \pi}(x\in M)\geq \eps > 0$ whenever $M\neq\emptyset$. 
Then there is a quantum algorithm that finds an element of $M$, if $M \ne \emptyset$,
with bounded error and with cost
$$\textstyle O\left(\mathsf{S}+\frac{1}{\sqrt{\eps}}\left(\frac{1}{\sqrt{\delta}}\mathsf{U}+\mathsf{C}\right)\right).$$
\end{theorem}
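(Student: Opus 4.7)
The plan is to mirror the MNRS proof~\cite{MNRS11}: build a walk operator $W(P) = R_2 R_1$ as a product of two reflections from the three update primitives, verify that $\ket{\pi}$ is a $+1$-eigenvector and that the phase gap of $W(P)$ on the orthogonal complement of $\ket{\pi}$ inside the invariant subspace is $\Omega(\sqrt{\delta})$, and then combine $W(P)$ with the checking reflection in the standard way to obtain the stated cost $\mathsf{S} + \tfrac{1}{\sqrt{\eps}}(\tfrac{1}{\sqrt{\delta}}\mathsf{U}+\mathsf{C})$.

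First I would define $R_1$ to be the reflection about $\spn\{\sum_y \sqrt{P(x,y)}\ket{x,y}\ket{D(x,y)} : x\in X\}$, implemented (up to a global sign) by applying \textsc{Local Diffusion}, then the $(X,0)$-\textsc{Phase Flip}, then the inverse of \textsc{Local Diffusion}. That this is an honest reflection, and not a larger unitary that scrambles the rest of the Hilbert space, is precisely where the control requirement of the footnote matters: \textsc{Local Diffusion} being controlled on the first two registers makes it a genuine unitary on the whole space, so conjugating the phase flip by it produces the desired reflection. Next I would define $R_2$ as the analogous reflection about $\spn\{\sum_x \sqrt{P(x,y)}\ket{x,y}\ket{D(x,y)} : y\in X\}$, implemented as \textsc{Database Swap} followed by $R_1$ followed by \textsc{Database Swap}; reversibility, $\pi(x)P(x,y)=\pi(y)P(y,x)$, together with the definition of \textsc{Database Swap}, ensures this carries the $R_1$-invariant subspace to the correct $R_2$-invariant subspace. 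Each of $R_1$ and $R_2$ costs $O(\mathsf{U})$.

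The spectral analysis then reduces to the coin-independent case: the map $\ket{x,y}\ket{D(x,y)} \mapsto \ket{x,y}$ is an isometry on the subspace spanned by genuine edge states and intertwines $W(P)$ with Szegedy's quantum walk operator for $P$~\cite{sze04}. Hence $\ket{\pi}$ is a $+1$-eigenvector of $W(P)$ and every other eigenvalue in the invariant subspace has phase at least $\Omega(\sqrt{\delta})$. The rest of the algorithm is then standard MNRS search: prepare $\ket{\pi}$ once at cost $\mathsf{S}$, and run $\Theta(1/\sqrt{\eps})$ rounds, each consisting of the checking reflection at cost $\mathsf{C}$ followed by a phase-estimation-based approximate reflection about $\ket{\pi}$ built from $O(1/\sqrt{\delta})$ applications of $W(P)$, rotating the state into the marked subspace.

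The main obstacle is verifying that the three primitives really do compose into honest reflections $R_1$ and $R_2$, given that $D(x,y)$ depends on the coin. In MNRS this was automatic because the data factored as $D(x)D(y)$ and swapping the vertex registers swapped the data for free; here \textsc{Database Swap} is a genuinely new primitive whose action on the entire Hilbert space, not just on valid edge states, must be pinned down carefully, for exactly the reasons flagged in the footnote about \textsc{Local Diffusion}. Once this implementation hygiene is in place --- the key point being that each primitive acts as the identity on the orthogonal garbage it never touches --- the spectral and amplitude-amplification analysis of \cite{MNRS11} carries over essentially verbatim.
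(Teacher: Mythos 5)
Your proposal matches the paper's proof essentially verbatim: the same walk operator $W(P)=((\textsc{Database Swap})\cdot \mathrm{ref}_{\mathcal{A}})^2$ with $\mathcal{A}=\spn\{\sum_y\sqrt{P(x,y)}\ket{x,y}\ket{D(x,y)}:x\in X\}$, the same reduction of the spectral analysis to the data-free space via the isomorphism $\ket{x,y}\mapsto\ket{x,y}\ket{D(x,y)}$, and the same identification of the two new costs relative to \cite{MNRS11} (the $(X,0)$-\textsc{Phase Flip}, needed because $\ket{D(x,0)}$ is not simply $\ket{0}$, and the now-nontrivial \textsc{Database Swap}). The only slip is the order of the conjugation: to obtain $\mathrm{ref}_{\mathcal{A}}$ one must temporally apply $(\textsc{Local Diffusion})^\dagger$ first, then the $(X,0)$-\textsc{Phase Flip}, then \textsc{Local Diffusion}, i.e., the reverse of the sequence you describe.
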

\begin{proof}
Our quantum walk algorithm is nearly identical to that of \cite{MNRS11}, so the proof of this theorem is also very similar. Just as in \cite{MNRS11}, we define a walk operator, $W(P)$, and analyze its spectral properties. 
Let $\mathcal{A}\defeq\spn\{\sum_y\sqrt{P(x,y)}\ket{x,y}\ket{D(x,y)}:x\in X\}$ and define
$W(P)\defeq((\textsc{Database Swap})\cdot \mathrm{ref}_\mathcal{A})^2$, where $\mathrm{ref}_\mathcal{A}$ denotes the reflection about $\mathcal{A}$.

As in \cite{MNRS11}, we can define 
$\mathcal{H}\defeq\spn\{\ket{x,y}: (x,y)\in (X\times \{\zero\})\cup\overrightarrow{E}\}$
and 
$\mathcal{H}_D\defeq\spn\{\ket{x,y,{D}(x,y)}: (x,y)\in (X\times \{\zero\})\cup\overrightarrow{E}\}$.
As in \cite{MNRS11}, there is a natural isomorphism $\ket{x,y}\mapsto\ket{x,y}_D=\ket{x,y,{D}(x,y)}$, and $\mathcal{H}_D$ is invariant under both $W(P)$ and the checking operation. Thus, the spectral analysis may be done in $\mathcal{H}$, on states without data, \emph{exactly} as in \cite{MNRS11}. 
However, there are some slight differences in how we implement $W(P)$, which we now discuss.

The first difference is easy to see: in \cite{MNRS11}, the \textsc{Database Swap} can be accomplished trivially by a $\text{SWAP}$ operation, mapping $\ket{x}\ket{y}\ket{D(x)}\ket{D(y)}$ to $\ket{y}\ket{x}\ket{D(y)}\ket{D(x)}$, whereas in our case, 
there may be a nontrivial cost associated with the mapping $\ket{D(x,y)}\mapsto \ket{D(y,x)}$, which we must include in the calculation of the update cost.

The second difference is more subtle. 
In \cite{MNRS11}, $\mathrm{ref}_\mathcal{A}$ is implemented by applying $\textsc{(Local Diffusion)}^\dagger$, reflecting about $\ket{0,D(0)}$
(since the data only refers to a vertex)
 in the coin register, and then applying $\textsc{(Local Diffusion)}$. 
 It is simple to reflect about $\ket{0,D(0)}$, since $\ket{D(0)}=\ket{0}$ in the formalism of~\cite{MNRS11}. In \cite{MNRS11}, this reflection is sufficient, because the operation $(\textsc{Local Diffusion})^\dagger$ fixes the vertex and its data, $\ket{x}\ket{D(x)}$, so in particular, it is still in the space $\spn\{\ket{x}\ket{D(x)}:x\in X\}$.  The register containing the coin and its data, $\ket{y}\ket{D(y)}$, may be moved out of this space by $(\textsc{Local Diffusion})^\dagger$, so we must reflect about $\ket{0}\ket{D(0)}$, but this is straightforward.

With coin-dependent data, a single register $\ket{D(x,0)}$ holds the data for both the vertex and its coin, and the operation $(\textsc{Local Diffusion})^\dagger$ may take the coin as well as the entire data register out of the space $\mathcal{H}_D$, so we need to reflect about $\ket{0}\ket{{D}(x,0)}$, which is not necessarily defined to be $\ket{0}\ket{0}$. 
This explains why the cost of $(X,0)$-\textsc{Phase Flip} is also part of the update cost. In summary, we implement $W(P)$ by 
$(\textsc{(Database Swap)}\cdot \textsc{(Local Diffusion)}\cdot($(X,0)$-\textsc{Phase Flip})\cdot\textsc{(Local Diffusion)}^\dagger)^2$.
\end{proof}

\subsection{Nested Updates}\label{sec:rec}

We show how to implement efficient nested updates using the coin-dependent data framework. 
 Let ${C}: X\cup\{0\}\rightarrow \mathcal{C}$ be some coin-independent data structure (that will be a part of the final data structure) with $\ket{C(0)}=\ket{0}$,
where we can 
reflect about $\spn\{\ket{x}\ket{C(x)}:x\in M\}$ in cost $\mathsf{C}_C$. 
 In the motivating example, if $x=S_2$ is a set of collision pairs, then $C(S_2)$ stores their query values.

Fix $x\in X$. Let $P^{x}$ be a walk on a graph $G^x=(V^x,E^x)$ 
with stationary distribution $\pi^{x}$ and marked set $M^{x}\subset V^{x}$. 
We use this walk to perform \textsc{Local Diffusion} over $\ket{x}$. 
Let ${d}^{x}$ be the 
data for this walk.

When there is ambiguity, we specify the data structure with a subscript.
For instance,
$\ket{\pi}_D=\sum_{x,y\in X} \sqrt{\pi(x)P(x,y)}\ket{x,y}\ket{D(x,y)}$ and
$\ket{\pi}^0_C=\sum_{x\in X}\sqrt{\pi(x)}\ket{x,0}\ket{C(x),0}$. 
Similarly, $\mathsf{S}_C$ is the cost to construct the state $\ket{\pi}_C$.

\begin{definition}
The family $(P^x,M^x,d^x)_{x\in X}$ {\em implements the \textsc{Local Diffusion} and \textsc{Database Swap} of $(P,C)$ with cost $\mathsf{T}$}
if the following two maps can be implemented with cost $\mathsf{T}$:\\[10pt]
\textsc{Local Diffusion with Garbage}: For some garbage states $(\ket{\Garbage(x,y)})_{(x,y)\in\overrightarrow{E}}$, an operation controlled on the vertex $x$ and $C(x)$, acting as
\[
  \ket{x,0}\ket{C(x),0}\ket{\pi^x(M^x)}_{d^x}
\mapsto\sum_{y\in X}\sqrt{P(x,y)}\ket{x,y}\ket{C(x),C(y)}\ket{\Garbage(x,y)};
\]
\textsc{Garbage Swap}: For any edge $(x,y)\in \overrightarrow{E}$,
\[ \ket{x,y}\ket{C(x),C(y)}\ket{\Garbage(x,y)}\mapsto\ket{y,x}\ket{C(y),C(x)}\ket{\Garbage(y,x)}.
\]
The {\em data structure of the implementation} is
$\ket{{D}(x,0)}=\ket{C(x),0}\ket{\pi^x(M^x)}_{d^x}$ for all $x\in X$
and 
$\ket{{D}(x,y)}=\ket{C(x),C(y)}\ket{\Garbage(x,y)}$ for any edge $(x,y)\in \overrightarrow{E}$.
\end{definition}

\begin{theorem}\label{th:nested-update}
Let $P$ be a reversible, ergodic Markov chain on $G=(X,E)$ with 
spectral gap $\delta>0$, and let $C$ be a data structure for $P$.
Let $M\subseteq X$ be such that $\Pr_{x\sim \pi}(x\in M)\geq \eps$ for some $\eps>0$ whenever $M\neq\emptyset$.
Let $(P^x,M^x,d^x)_{x\in X}$ be a family implementing the \textsc{Local Diffusion} and \textsc{Database Swap} of $(P,C)$ with cost $\mathsf{T}$,
and let $\mathsf{S}',\mathsf{U}',\mathsf{C}',1/\eps',1/\delta'$ be upper bounds 
on the costs and parameters associated with each of the $(P^{x},M^x,d^x)$. 
Then there is a quantum algorithm that finds an element of $M$, if $M \ne \emptyset$,
with bounded error and with cost
$$\textstyle\tilde{O}\left(\mathsf{S}_C+\mathsf{S}'+\frac{1}{\sqrt{\eps}}\left(\frac{1}{\sqrt{\delta}}\left(\frac{1}{\sqrt{\eps'}}\left(\frac{1}{\sqrt{\delta'}}\mathsf{U}'+\mathsf{C}'\right)+\mathsf{T}\right)+\mathsf{C}_C\right)\right).$$
\end{theorem}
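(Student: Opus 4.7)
The strategy is to apply Theorem~\ref{thm:coin} to the outer chain $P$ with a coin-dependent data structure $\widetilde{D}$ chosen so as to avoid storing the only-approximately-preparable state $\ket{\pi^x(M^x)}_{d^x}$ between outer walk steps. Concretely, I take $\ket{\widetilde{D}(x,\zero)}\defeq\ket{C(x),0}\ket{\pi^x}_{d^x}$, keeping the inner walk in its exactly-preparable starting state, and $\ket{\widetilde{D}(x,y)}\defeq\ket{C(x),C(y)}\ket{\Garbage(x,y)}$ for $(x,y)\in\overrightarrow{E}$, where $\ket{\Garbage(x,y)}$ is the garbage produced by the implementation's \textsc{Local Diffusion with Garbage}. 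The inner find-marked routine is folded into the outer \textsc{Local Diffusion} rather than run once in setup; this differs from the ``data structure of the implementation'' in the preceding definition, but is equivalent at the level of the search algorithm.

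With $\widetilde{D}$ in hand, I bound the three costs of Section~\ref{sec-coin} and substitute into Theorem~\ref{thm:coin}. For \emph{setup}, $\ket{\pi}^\zero_{\widetilde{D}}=\sum_x\sqrt{\pi(x)}\ket{x,\zero}\ket{C(x),0}\ket{\pi^x}_{d^x}$ is built by producing $\ket{\pi}^\zero_C$ at cost $\mathsf{S}_C$ and then appending $\ket{\pi^x}_{d^x}$ controlled on $x$ at cost $\mathsf{S}'$, giving $\widetilde{\mathsf{S}}=\mathsf{S}_C+\mathsf{S}'$. For \emph{update}, I realize outer \textsc{Local Diffusion} as the composition of (i)~the inner find-marked routine that maps $\ket{\pi^x}_{d^x}$ to (an approximation of) $\ket{\pi^x(M^x)}_{d^x}$ at cost $O(\frac{1}{\sqrt{\eps'}}(\frac{1}{\sqrt{\delta'}}\mathsf{U}'+\mathsf{C}'))$ via the ``approximately map'' fact of Section~\ref{app:qw}, followed by (ii)~the implementation's \textsc{Local Diffusion with Garbage} at cost $\mathsf{T}$. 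The $(X,\zero)$-\textsc{Phase Flip} is a free conditional phase on the coin register, and \textsc{Database Swap} is the implementation's \textsc{Garbage Swap} at cost $\mathsf{T}$, so $\widetilde{\mathsf{U}}=O(\frac{1}{\sqrt{\eps'}}(\frac{1}{\sqrt{\delta'}}\mathsf{U}'+\mathsf{C}')+\mathsf{T})$. For \emph{checking}, the marked-$x$ reflection depends only on $\ket{x}\ket{C(x)}$, so $\widetilde{\mathsf{C}}=\mathsf{C}_C$. Plugging these into Theorem~\ref{thm:coin} yields the claimed bound up to logarithmic factors.

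The principal obstacle is controlling the approximation error inherited from the inner find-marked routine. The outer algorithm performs $\Theta(1/\sqrt{\eps\delta})$ walk iterations, and each outer \textsc{Local Diffusion} deviates from its ideal action by the find-marked error, so the inner routine must be amplified to error $\tilde{O}(\sqrt{\eps\delta})$ via $O(\log(1/\eps\delta))$ sequential repetitions; this is exactly what the $\tilde{O}$ in the theorem absorbs. Beyond this, one must check that the composition of inner find-marked with the implementation's primitives is a legitimate coin-dependent data structure in the sense of Section~\ref{sec-coin}---in particular, that the garbage states $\ket{\Garbage(x,y)}$ are orthogonal (which is forced by the unitarity of any physical implementation of \textsc{Local Diffusion with Garbage}) and that \textsc{Local Diffusion} and the Swap are controlled on the appropriate registers, as flagged by the footnote in Section~\ref{app:qw}. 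The remaining spectral analysis and the implementation of $W(P)$ via $((\textsc{Database Swap})\cdot\textsc{(Local Diffusion)}\cdot((X,\zero)\text{-\textsc{Phase Flip}})\cdot\textsc{(Local Diffusion)}^\dagger)^2$ then proceed exactly as in the proof of Theorem~\ref{thm:coin}.
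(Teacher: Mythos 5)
Your decomposition differs from the paper's in one deliberate way: you store $\ket{\pi^x}_{d^x}$ (the inner walk's exactly preparable starting state) in $\ket{D(x,\zero)}$ and fold the find-marked routine into the outer \textsc{Local Diffusion}, whereas the paper stores $\ket{\pi^x(M^x)}_{d^x}$, pays $\frac{1}{\sqrt{\eps'}}(\frac{1}{\sqrt{\delta'}}\mathsf{U}'+\mathsf{C}')$ once more during setup, and takes the \textsc{Local Diffusion} to be exactly the implementation's \textsc{Local Diffusion with Garbage}. Either bookkeeping can be made to work. But your proof has a genuine gap exactly at the point the whole framework was designed to address: the claim that the $(X,\zero)$-\textsc{Phase Flip} is ``a free conditional phase on the coin register.'' With coin-dependent data, $(\textsc{Local Diffusion})^\dagger$ can map a state in $\mathcal{H}_D$ to one with components of the form $\ket{x,\zero}\ket{C(x),0}\ket{\phi}$ where $\ket{\phi}\perp\ket{\pi^x}_{d^x}$; a phase conditioned only on the coin register being $\ket{\zero}$ would wrongly flip the sign of such components, and the conjugated operator would then fail to act as $\mathrm{ref}_{\mathcal{A}}$ on $\mathcal{H}_D$ (this is precisely the footnote in Section~\ref{app:qw} and the ``second difference'' in the proof of Theorem~\ref{thm:coin} that you cite but then dismiss). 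What is actually required is a reflection, controlled on $\ket{x}\ket{C(x)}$, about the specific state $\ket{\zero}\ket{0}\ket{\pi^x}_{d^x}$, and $\ket{\pi^x}_{d^x}$ is a large superposition carrying data: reflecting about it is not free. It costs either $2\mathsf{S}'$ per step (unprepare/reprepare, which would destroy the advantage of nesting) or $\tilde{O}(\frac{1}{\sqrt{\delta'}}\mathsf{U}')$ via phase estimation on the inner walk operator $W(P^x)$.

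The good news is that the repair does not change the asymptotics: charging $\tilde{O}(\frac{1}{\sqrt{\delta'}}\mathsf{U}')$ for the phase flip keeps your update cost at $\tilde{O}(\frac{1}{\sqrt{\eps'}}(\frac{1}{\sqrt{\delta'}}\mathsf{U}'+\mathsf{C}')+\mathsf{T})$, so the stated bound survives. (The paper's version pays $\frac{1}{\sqrt{\eps'}}(\frac{1}{\sqrt{\delta'}}\mathsf{U}'+\mathsf{C}')$ here instead, for the reflection about $\ket{\pi^x(M^x)}$ implemented by running the inner walk backward, phase-estimating, and running it forward.) Two smaller points: your parenthetical that the garbage states ``are orthogonal'' is neither implied by unitarity nor needed --- in the application one even has $\ket{\Garbage(x,y)}=\ket{\Garbage(y,x)}$; what matters is that the states $\ket{x,y}\ket{D(x,y)}$ are orthonormal, which is automatic from the $\ket{x,y}$ labels. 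And the error amplification is by increasing the precision of the phase estimation inside the reflection, not by sequential repetition of a bounded-error subroutine; the logarithmic overhead is the same, but the mechanism matters since, as you note elsewhere, update errors cannot be amplified the way checking errors can.
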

\begin{proof}
We achieve this upper bound using the quantization of $P$ with the data structure of the implementation, ${D}$. We must compute the cost of the setup, update, and checking operations associated with this walk. 

\begin{description}

\item[\emph{Checking}:] The checking cost $\mathsf{C}=\mathsf{C}_D$
 is the cost to reflect about $\spn\{\ket{x}\ket{y}\ket{D(x,y)}:x\in M\}=\spn\{\ket{x}\ket{y}\ket{C(x),C(y)}\ket{\psi(x,y)}:x\in M\}$. We can implement this in $\mathcal{H}_D$ by reflecting about $\spn\{\ket{x}\ket{C(x)}:x\in M\}$, which costs $\mathsf{C}_C$. 

\item[\emph{Setup}:] 
Recall that $\ket{C(0)}=\ket{0}$.
The setup cost $\mathsf{S}=\mathsf{S}_D$ is the cost of constructing the state 
$$\textstyle\sum_{x\in X}\sqrt{\pi(x)}\ket{x}\ket{0}\ket{{D}(x,0)}=\sum_{x\in X}\sqrt{\pi(x)}\ket{x}\ket{0}\ket{C(x),0}\ket{\pi^x(M^x)}.$$
\noindent We do this as follows. We first construct $\sum_{x\in X}\sqrt{\pi(x)}\ket{x,0}\ket{C(x),0}$ in cost $\mathsf{S}_C$. Next, we apply the mapping $\ket{x}\mapsto \ket{x}\ket{\pi^x}$ in cost $\mathsf{S}'$. Finally, we use the quantization of $P^x$ to perform the mapping $\ket{x}\ket{\pi^x}\mapsto \ket{x}\ket{\pi^x(M^x)}$ in cost $\frac{1}{\sqrt{\eps'}}(\frac{1}{\sqrt{\delta'}}\mathsf{U}'+\mathsf{C}')$. The full setup cost is then
$\mathsf{S}=\mathsf{S}_C+\mathsf{S}'+\frac{1}{\sqrt{\eps'}}(\frac{1}{\sqrt{\delta'}}\mathsf{U}'+\mathsf{C}').$

\item[\emph{Update}:] The update cost 
has three contributions. The first is the \textsc{Local Diffusion} operation, which, by the definition of ${D}$, is exactly the \textsc{Local Diffusion with Garbage} operation.  Similarly, the \textsc{Database Swap} is exactly the \textsc{Garbage Swap}, so these two operations have total cost $\mathsf{T}$. The $(X,0)$-\textsc{Phase Flip} is simply a reflection about states of the form $\ket{x}\ket{{D}(x,0)}=\ket{x}\ket{C(x)}\ket{\pi^x(M^x)}$. Given any $x\in X$, we can reflect about $\ket{\pi^x(M^x)}$ using the quantization of $P^x$ in cost $\frac{1}{\sqrt{\eps'}}(\frac{1}{\sqrt{\delta'}}\mathsf{U}'+\mathsf{C}')$ by running the algorithm of Theorem~\ref{thm:coin}.  In particular, we can run the walk backward to prepare the state $\ket{\pi^x}$, perform phase estimation on the walk operator to implement the reflection about this state, and then run the walk forward to recover $\ket{\pi^x(M^x)}$.
However, this transformation is implemented approximately. To keep the overall error small, we need an accuracy of $O(1/\sqrt{\eps\delta\eps'\delta'})$, which leads to
an overhead logarithmic in the required accuracy. 
The reflection about $\ket{\pi^x(M^x)}$, controlled on $\ket{x}$, is sufficient because \textsc{Local Diffusion with Garbage} is controlled on $\ket{x}\ket{C(x)}$, and so it leaves these registers unchanged. Since we apply the $(X,0)$-\textsc{Phase Flip} just after applying $(\textsc{Local Diffusion})^\dagger$ (see proof of Theorem \ref{thm:coin}) to a state in $\mathcal{H}_D$, we can guarantee that these registers contain $\ket{x}\ket{C(x)}$ for some $x\in X$.
 The total update cost (up to log factors) 
is $\mathsf{U}=\mathsf{T}+\frac{1}{\sqrt{\eps'}}(\frac{1}{\sqrt{\delta'}}\mathsf{U}'+\mathsf{C}').$
\end{description}

Finally, the full cost of the quantization of $P$ (up to log factors) is
\begin{align*}
&\textstyle\mathsf{S}_C+\mathsf{S}'+\frac{1}{\sqrt{\eps'}}\(\frac{1}{\sqrt{\delta'}}\mathsf{U}'+\mathsf{C}'\)+\frac{1}{\sqrt{\eps}}\(\frac{1}{\sqrt{\delta}}\(\frac{1}{\sqrt{\eps'}}\(\frac{1}{\sqrt{\delta'}}\mathsf{U}'+\mathsf{C}'\)+\mathsf{T}\)+\mathsf{C}_C\) \\
&\quad\textstyle=\tilde{O}\(\mathsf{S}_C+\mathsf{S}'+\frac{1}{\sqrt{\eps}}\(\frac{1}{\sqrt{\delta}}\(\frac{1}{\sqrt{\eps'}}\(\frac{1}{\sqrt{\delta'}}\mathsf{U}'+\mathsf{C}'\)+\mathsf{T}\)+\mathsf{C}_C\)\). \qedhere
\end{align*}
\end{proof}

If $\mathsf{T}=0$ (as may be the case, e.g., when the notion of cost  is query complexity), then the expression is exactly what we would have liked for nested updates.

%%%%%%%%%%%%%%%%%%%%%%%%%%%%%%%%%%%%%%%%%%%%%%%%%%%%%%%%%%%%%%%%%%%%%%%%%%%%%%%%%

\section{Application: Quantum Query Complexity of 3-Distinctness}\label{sec:query}

In this section we prove the following theorem.

\begin{theorem}\label{th:query-main}
The quantum query complexity of $3$-Distinctness is $\tilde{O}(n^{5/7})$.
\end{theorem}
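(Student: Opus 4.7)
The plan is to apply the nested-update framework of Theorem~\ref{th:nested-update} with an outer walk on sets of $2$-collisions and an inner walk that generates new $2$-collisions via a variant of the Ambainis Element Distinctness walk, along the lines sketched in Section~\ref{sec:motiv}. Concretely, the outer walk would be on the Johnson graph $J(n_2,s_2)$: a vertex $S_2$ is a set of $s_2$ collision pairs drawn from $\colset$, and the coin-independent data $C(S_2)$ records the index pairs together with the common value $\chi_i$ of each pair. Using the tripartition from Section~\ref{sec:prelim-3dist}, I mark $S_2$ when it contains the unique $(A_1\times A_2)$-pair of the hidden $3$-collision; the check is a Grover search over $A_3$ for an index whose value appears in the stored list, giving $\delta=\Omega(1/s_2)$, $\eps=\Omega(s_2/n)$, and $\mathsf{C}_C=\tilde O(\sqrt n)$.

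The inner walk, used to perform the outer \textsc{Local Diffusion}, would be an Ambainis-style walk on $J(|A_1|,s_1)$, with $S_1\subseteq A_1$ marked whenever it contains an index forming a $2$-collision with a previously recorded entry of $A_2$; this gives $\delta'=\Omega(1/s_1)$, $\eps'=\Omega(s_1^2/n)$, $\mathsf{S}'=O(s_1)$, and $\mathsf{U}'=\mathsf{C}'=O(1)$. The principal obstacle will be to verify that this walk realizes the \textsc{Local Diffusion with Garbage} and \textsc{Garbage Swap} of the definition in Section~\ref{sec:rec}. My plan is to use the inner walk's marked state $\ket{\pi^x(M^x)}$, whose register contains a new $2$-collision together with the $s_1$-subset that certified it, as the garbage tagged to each outer directed edge, and to design the garbage labeling so that reversing the outer edge corresponds to a reversible, query-free relabeling of the inner registers---i.e., $\mathsf{T}=0$ in the query model. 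The coin-dependent data framework is essential here: in the $\ket{D(x)}\ket{D(y)}$ layout of \cite{MNRS11} the garbage would be split across the two data registers, and the swap would entangle them in a way that spoils the interference needed by the outer walk.

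With $\mathsf{T}=0$, expanding Theorem~\ref{th:nested-update} yields total query cost
\[
\tilde O\!\left(\mathsf{S}_C+s_1+\tfrac{n}{\sqrt{s_1}}+\tfrac{n}{\sqrt{s_2}}\right),
\]
where the $n/\sqrt{s_1}$ term combines the outer amplification $\sqrt{1/(\eps\delta)}=\sqrt n$ with the per-call inner-walk cost $\sqrt{1/(\eps'\delta')}=\sqrt{n/s_1}$, and $n/\sqrt{s_2}$ is the amplified check cost $\sqrt{1/\eps}\,\mathsf{C}_C$. Choosing $s_1=n^{2/3}$ collapses the inner contributions to $\tilde O(n^{2/3})$, and $s_2=n^{4/7}$ makes the check contribution $\tilde O(n^{5/7})$, so both are within the desired bound. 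The final delicate step is to prepare the outer stationary superposition in $\mathsf{S}_C=\tilde O(n^{5/7})$: my plan is to build it from a classically computed initial $S_2$ using a sequence of inner-walk reflections (Theorem~\ref{thm:coin}) amortized against the single payment of $\mathsf{S}'$, again leveraging the coin-dependent data framework to reuse the inner-walk starting state across these calls. Assembling these pieces gives the claimed bound $\tilde O(n^{5/7})$.
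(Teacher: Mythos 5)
There is a genuine gap in your cost accounting for the update step, and it is precisely the obstacle that forces the paper's main structural choice. You take the outer walk to be the ordinary Johnson graph $J(n_2,s_2)$ (one collision pair replaced per step) and an inner walk that finds a single new collision, quoting $\eps'=\Omega(s_1^2/n)$ and an inner per-call cost $\sqrt{1/(\eps'\delta')}=\sqrt{n/s_1}$. But with your choice $s_1=n^{2/3}\gg\sqrt n$ we have $s_1^2/n\gg 1$, so $\eps'$ saturates at $\Theta(1)$ and cannot give a $1/\sqrt{\eps'}<1$ ``discount''; the inner walk still has spectral gap $\delta'=\Theta(1/s_1)$, so each invocation (in particular the reflection about $\ket{\pi^x(M^x)}$ needed for the $(X,0)$-\textsc{Phase Flip}) costs at least $\sqrt{1/\delta'}=\sqrt{s_1}$, not $\sqrt{n/s_1}$. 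The update term is then $\sqrt{1/(\eps\delta)}\cdot\sqrt{s_1}=\sqrt{n s_1}\ge n^{3/4}$ for $s_1\ge\sqrt n$, while taking $s_1\le\sqrt n$ (so that $\eps'=\Theta(s_1^2/n)\le 1$ is a legitimate success probability) gives $\sqrt n\cdot\sqrt{n/s_1}=n/\sqrt{s_1}\ge n^{3/4}$. Either way a one-pair-per-step outer walk caps out at the trivial $n^{3/4}$. The paper's resolution is to batch: a set of size $s_1>\sqrt n$ contains $m=\Theta(s_1^2/n)\gg 1$ collisions in expectation, so the outer walk is run on the generalized Johnson graph $J(n_2,s_2,m)$, replacing $m$ pairs per step; this shrinks the number of outer steps to $\sqrt{1/(\eps\delta)}=\sqrt{n_2/m}=\Theta(n/s_1)$ while the inner cost stays $\sqrt{s_1}$, yielding $n/\sqrt{s_1}$ with $s_1$ large, and it is only then that $s_1=n^{5/7}$, $s_2=n^{4/7}$ gives $\tilde O(n^{5/7})$. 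Your proposal is missing this batching idea, and the combinatorial heart of the paper (the exact form of the garbage state, the identity $|M^{S_2}|=|M^{S_2'}|$ from disjointness of collisions that makes the \textsc{Garbage Swap} query-free) is built around extracting all $m$ collisions from the inner walk's marked state at once.

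Two secondary problems: your inner walk on subsets of $A_1$ alone cannot certify a $2$-collision without also holding the matching index of $A_2$ (the paper walks on subsets of $(A_1\cup A_2)\setminus\inds(S_2)$ and marks sets containing at least $m$ collision pairs); and your setup sketch is not viable as stated --- a ``classically computed initial $S_2$'' plus reflections does not produce the uniform superposition over $\binom{\colset}{s_2}$, and the paper's actual setup (Grover searches for partners of a random $s_1$-subset, followed by a re-randomization $A_1=\tilde A_1\setminus I_1$ of the tripartition) costs $\tilde O(s_1+s_2\sqrt{n/s_1})$, which with your $s_1=n^{2/3}$ is $n^{31/42}>n^{5/7}$; this is another reason the paper is forced to take $s_1=n^{5/7}$.
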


We begin by giving a high-level description of the quantum walk algorithm before describing the implementation of each required procedure and their costs. First we define some notation. 

For any set $S_1\subseteq A_1 \cup A_2$, let $\cols(S_1)\defeq\{(i,j) \in A_1 \times A_2: i,j  \in S_1, i\neq j,\inp_i=\inp_j\}$ be the set of $2$-collisions in $S_1$ and for any set $S_2\subset A_1\times A_2$, let $\inds(S_2)\defeq \bigcup_{(i,j)\in S_2}\{i,j\}$ be the set of indices that are part of pairs in $S_2$. In general, we only consider $2$-collisions in $A_1\times A_2$; other $2$-collisions in $\inp$ are ignored. For any pair of sets $A,B$, let $\colset(A,B) \defeq \{(i,j) \in A \times B:i\neq j,\inp_i=\inp_j\}$ be the set of $2$-collisions between $A$ and $B$. For convenience, we define $\colset \defeq \colset(A_1, A_2)$
Let $n_2 \defeq \abs{\colset}$ be the size of this set. For any set $S_2\subseteq\colset$, we denote the set of queried values by $Q(S_2)\defeq\{(i,j,\inp_i):(i,j)\in S_2\}$. Similarly, for any set $S_1\subset [n]$, we denote the set of queried values by $Q(S_1)\defeq\{(i,\inp_i):i\in S_1\}.$

\subsection{High-Level Description of the Walk}

\paragraph{The Walk} Our overall strategy is to find a $2$-collision $(i,j) \in A_1 \times A_2$ such that $\exists k\in A_3$ with $\{i,j,k\}$ a $3$-collision. Let $s_1,s_2<n$ be parameters to be optimized. We walk on the vertices $X=\binom{\colset}{s_2}$, with each vertex corresponding to a set of $s_2$ $2$-collisions from $A_1 \times A_2$. A vertex is considered marked if it contains $(i,j)$ such that $\exists k\in A_3$ with $\{i,j,k\}$ a $3$-collision. Thus, if $M\neq \emptyset$, the proportion of marked vertices is $\eps = \Omega(\frac{s_2}{n_2})$. 

To perform an update, we use an Element Distinctness subroutine that walks on $s_1$-sized subsets of $A_1 \cup A_2$. However, since $n_2$ is large by assumption, the expected number of collisions in a set of size $s_1$ is large if $s_1 \gg \sqrt{n}$, which we suppose holds. It would be a waste to take only one and leave the rest, so we replace multiple elements of $S_2$ in each step. This motivates using a generalized Johnson graph $J(n_2,s_2,m)$ for the main walk, where we set $m \defeq \frac{s_1^2n_2}{n^2}=O(\frac{s_1^2}{n})$, the expected number of $2$-collisions in a set of size $s_1$. In $J(n_2,s_2,m)$, two vertices $S_2$ and $S_2'$ are adjacent if $\abs{S_2\cap S_2'}=s_2-m$, so we can move from $S_2$ to $S_2'$ by replacing $m$ elements of $S_2$ by $m$ distinct elements. Let $\Gamma(S_2)$ denote the set of vertices adjacent to $S_2$. 
The spectral gap of $J(n_2,s_2,m)$ is $\delta=\Omega(\frac{m}{s_2})$.

\paragraph{The Update} To perform an update step on the vertex $S_2$, we use the Element Distinctness algorithm of \cite{amb04} as a subroutine, with some difference in how we define the marked set.
Specifically, we use the subroutine to look for $m$ $2$-collisions, with $m \gg 1$. Furthermore, we  only want to find $2$-collisions that are not already in $S_2$, so $P^{S_2}$ is a walk on $ J(2n/3-2s_2,s_1)$, with vertices corresponding to sets of $s_1$ indices from $(A_1\cup A_2)\setminus\inds(S_2)$,
and we consider a vertex marked if it contains at least $m$ pairs of indices that are $2$-collisions (i.e.,
$\textstyle M^{S_2}=\{S_1\in\binom{(A_1 \cup A_2)\setminus \inds(S_2)}{s_1}: \abs{\cols(S_1)}\geq m\}$). 

\paragraph {The Data} We store the value $\inp_i$ with each $(i,j)\in S_2$ and $i\in S_1$, i.e., $\ket{C(S_2)}=\ket{Q(S_2)}$ and $\ket{d^{S_2}(S_1,S_1')}=\ket{Q(S_1),Q(S_1')}$.  
Although technically this is part of the data, it is classical and coin-independent, so it is straightforward. Furthermore, since $S_1$ is encoded in $Q(S_1)$ and $S_2$ in $Q(S_2)$, we simply write $\ket{Q(S_1)}$ instead of $\ket{S_1,Q(S_1)}$ and $\ket{Q(S_2)}$ instead of $\ket{S_2,Q(S_2)}$. 

The rest of the data is what is actually interesting.
We use the state $\ket{\pi^{S_2}(M^{S_2})}^0_{d^{S_2}}$ in the following instead of $\ket{\pi^{S_2}(M^{S_2})}_{d^{S_2}}$ since it is easy to map between these two states.
For every $S_2\in X$, let
$$\ket{{D}(S_2,0)} \defeq\ket{Q(S_2),0}\ket{\pi^{S_2}(M^{S_2})}^0_{d^{S_2}}=\ket{Q(S_2)}\frac{1}{\sqrt{\abs{M^{S_2}}}}\sum_{\substack{S_1\in M^{S_2}}}\ket{Q(S_1)},$$
and for every edge $(S_2,S_2')$, let 
$\ket{{D}(S_2,S_2')} \defeq \ket{Q(S_2),Q(S_2')}\ket{\Garbage(S_2,S_2')}$ where
\begin{equation}\label{eq:garbage}\ket{\Garbage(S_2,S_2')}\defeq\sum_{\substack{\tilde S_1\in\binom{(A_1\cup A_2)\setminus \inds(S_2\cup S_2')}{s_1-2m}}}\sqrt{\frac{\binom{n_2-s_2}{m}}{\binom{\abs{\cols(\tilde S_1)}+m}{m}\abs{M^{S_2}}}}\ket{Q(\tilde S_1)}.
\end{equation}
We define $\ket{\Garbage}$ in this way precisely because it is what naturally occurs when we attempt to perform the diffusion.

\subsection{Implementation and Cost Analysis}\label{sec:query-analysis}

We now explain how to implement the walk described at the beginning of this section and analyze the costs of the associated operations.

We have assumed that we have some partition $A_1, A_2, A_3$ of $[n]$, although we actually want to run our algorithm on a random partition. The starting state is a uniform superposition over $s_2$ collision pairs across the bipartition $A_1 \times A_2$. Unfortunately, given $A_1,A_2$, we are unable to construct a valid starting state.  However, we can generate a state-partition pair $(\ket{\pi(A_1,A_2)},A_1,A_2)$ such that the distribution of $A_1,A_2,A_3\defeq[n]\setminus (A_1\cup A_2)$ is sufficiently random, and $\ket{\pi(A_1,A_2)}$ is a starting state for the partition $A_1,A_2, A_3$.

\begin{theorem}[Outer walk setup cost $\mathsf{S}_C$]\label{th:query-setup}
The starting state of the outer walk, $\binom{n_2}{s_2}^{-1/2}\sum_{S_2\in \binom{\colset(A_1,A_2)}{s_2}}\ket{Q(S_2)}$, can be constructed for random variables $A_1,A_2,A_3$ with $|A_1|=|A_2|=|A_3|=n/3$, such that if $\inp$ has a unique $3$-collision $\{i,j,k\}$, then $\Pr((i,j,k)\in A_1 \times A_2 \times A_3) = \Omega (1)$, in $\tilde{O}(s_1+s_2\sqrt{{n}/{s_1}})$ queries. 
\end{theorem}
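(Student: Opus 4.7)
My plan is to first sample the partition classically and then construct the superposition by iteratively finding $s_2$ collision pairs in superposition, sharing a one-time Ambainis-style quantum setup across all iterations. For the partition I sample $(A_1,A_2,A_3)$ uniformly at random subject to $|A_1|=|A_2|=|A_3|=n/3$, which uses no queries; an elementary counting argument gives that for any fixed unordered $3$-collision $\{i,j,k\}$, some labeling places it in $A_1\times A_2\times A_3$ with probability at least $6/27-o(1)=\Omega(1)$.

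To construct $\binom{n_2}{s_2}^{-1/2}\sum_{S_2}\ket{Q(S_2)}$, I coherently query $s_1$ uniformly random indices from $A_1\cup A_2$ at cost $O(s_1)$, preparing the shared setup $\sum_R\ket{R,Q(R)}$ (the uniform superposition over $s_1$-subsets of $A_1\cup A_2$ with their query values). I then iteratively apply a pair-finding subroutine $s_2$ times. The subroutine performs a Grover-style amplitude amplification over $A_1\cup A_2\setminus R$ for an index whose value matches some value in $Q(R)$; per branch, the $\Theta(s_1)$ ``needles'' (collision partners of $R$'s elements) among $\Theta(n)$ candidates give cost $\tilde{O}(\sqrt{n/s_1})$ per call. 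After each call I copy the extracted collision pair into a fresh output register and attempt to restore the shared setup for reuse. A final, query-free symmetrization converts the ordered $s_2$-tuple of pairs into an unordered $s_2$-subset, yielding the target state at total cost $\tilde{O}(s_1+s_2\sqrt{n/s_1})$.

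The main obstacle will be that after each pair extraction the shared setup's $R$-register is no longer in its original uniform superposition but is instead entangled with the extracted pair, which would bias the next iteration and eventually deviate the marginal from the uniform distribution over $\binom{\colset}{s_2}$. I plan to address this by combining each extraction with a controlled rearrangement of $R$: swapping the indices just consumed for freshly oracle-queried indices of the same distribution, so that the conditional distribution of $R$ given all output registers remains uniform over $s_1$-subsets disjoint from the extracted pairs. Together with tight concentration of quantities like $|R\cap A_1|$ around $\Theta(s_1)$ (via Chernoff) ensuring the per-pair marginal is uniform to $1\pm o(1)$, this preserves the setup's reusability. Since $s_2=o(n)$, the $O(s_2)$ extra oracle calls from swaps and the polylogarithmic error control across iterations do not change the asymptotic query cost $\tilde{O}(s_1+s_2\sqrt{n/s_1})$.
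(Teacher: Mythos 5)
Your proposal has a genuine gap at exactly the point you flag as ``the main obstacle.'' Fixing the partition $A_1,A_2,A_3$ in advance and then extracting collision pairs one at a time from a shared register $R$ leaves $R$ entangled with the extracted pairs, and your proposed repair --- swapping consumed indices for fresh ones so that the \emph{conditional distribution} of $R$ given the outputs is uniform over $s_1$-subsets disjoint from the extracted pairs --- does not remove that entanglement: a residual register whose state is ``uniform over subsets disjoint from $\inds(S_2)$'' still depends on $S_2$, so tracing it out (or leaving it around as garbage) destroys the coherence across different $S_2$ that the outer walk requires. Moreover, the amplitude that your procedure places on a given $S_2$ is a sum over extraction histories of products of conditional amplitudes, each of which depends on $|H(R)|$ for the current $R$; you give no argument that this collapses to the constant $\binom{n_2}{s_2}^{-1/2}$, and ``uniform to $1\pm o(1)$ per pair'' compounded over $s_2=n^{4/7}$ extractions is not controlled unless the per-step bias is $o(1/s_2)$, which you do not establish.

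The paper's proof resolves precisely this difficulty by a structurally different move, and it is the reason the theorem is phrased with $A_1,A_2,A_3$ as \emph{random variables} rather than a partition fixed up front: one starts from an oversized $\tilde A_1$ of size $n/3+s_1-s_2$, queries an $s_1$-subset $I$, Grover-searches $\tilde A_2$ for $s_2$ collision partners, reversibly regroups $I$ into the colliding part $I_2$ (absorbed into $S_2$) and the non-colliding leftover $I_1$, and then \emph{measures} $I_1$ and redefines the partition as $A_1=\tilde A_1\setminus I_1$, $A_3=\tilde A_3\cup I_1$. The leftover register is thus eliminated by measurement rather than disentangled, and a combinatorial identity (the conditional amplitude depends only on $|H(I_1)|$, using disjointness of the $2$-collisions) shows the post-measurement state is \emph{exactly} uniform over $\binom{\colset(A_1,A_2)}{s_2}$ for the resulting partition. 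The paper explicitly notes that for a partition fixed in advance it does not know how to construct a valid starting state; your proposal would need either to adopt this state--partition-pair trick or to supply a genuinely new disentangling argument, and as written it supplies neither.
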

\begin{proof}
To begin, we choose a random tripartition $\tilde{A}_1, \tilde A_2, \tilde A_3$ of $[n]$ such that $\abs{\tilde A_1}=\frac{n}{3}+s_1-s_2$, $\abs{\tilde A_2}=\frac{n}{3}$, and $\abs{\tilde A_3} = \frac{n}{3}-s_1+s_2$. Our final sets $A_1, A_2, A_3$ are closely related to these sets, but satsify $\abs{A_1}=\abs{A_2}=\abs{A_3}=\frac{n}{3}$. Let $\tilde n_2$ be the number of $2$-collisions across $\tilde A_1 \times \tilde A_2$. We first create a uniform superposition over all subsets of $\tilde{A}_1$ of size $s_1$ along with their query values, $\binom{n/3+s_1-s_2}{s_1}^{-1/2}\sum_{I\in \binom{\tilde A_1}{s_1}}\ket{Q(I)}$, using $O(s_1)$ queries. 

For a set $I\in\binom{\tilde{A}_1}{s_1}$, let $H(I)\subset \tilde A_2$ denote the set $\{j\in \tilde A_2:\exists i\in I, \inp_i=\inp_j\}$ of indices in $\tilde A_2$ colliding with $I$. Next we repeatedly Grover search for indices in $H(I)$. For a uniform $I$, the size of $H(I)$ is roughly $\frac{\tilde n_2 s_1}{n}=\Omega(s_1)$ in expectation; more specifically, for most choices $\tilde A_1$ and $\tilde A_2$, we have $\Pr_I(\abs{H(I)}\in\Omega(s_1))\geq 1-o(1)$. We can therefore consider only the part of the state $\binom{n/3+s_1-s_2}{s_1}^{-1/2}\sum_{I\in\binom{\tilde A_1}{s_1}:\abs{H(I)}\geq \epsilon s_1}\ket{Q(I)}$, for a suitable constant $\epsilon$. 
Thus, we can use Grover search to find and query $s_2$ elements of $H(I)$ in $\tilde O(s_2\sqrt{{n}/{s_1}})$ queries, obtaining a state close to 
$$\binom{n/3+s_1-s_2}{s_1}^{-1/2}\sum_{I\in \binom{\tilde A_1}{s_1}:\abs{H(I)}\geq \epsilon s_1}\ket{Q(I)}\binom{\abs{H(I)}}{s_2}^{-1/2}\sum_{J\in\binom{H(I)}{s_2}}\ket{Q(J)}.$$

For a given $J$, we can partition the set $I$ into two disjoint sets: $I_1$, which contains all elements in $I$ that do not collide with any element in $J$; and $I_2$, which contains elements that do collide with an element in $J$. We can then combine $I_2$ with $J$ to get a set of $s_2$ collision pairs. The full reversible mapping, which costs 0 queries,  is $\ket{Q(I),Q(J)}\mapsto \ket{Q(I_1)}\ket{\{(i,j,\inp_i):i\in I_2, j\in J\}}$. Applying this transformation gives (a state close to)
$$\binom{n/3+s_1-s_2}{s_1}^{-1/2}\sum_{I_1\in\binom{\tilde A_1}{s_1-s_2}:\abs{H(I_1)}\geq \epsilon s_1 - s_2}\ket{Q(I_1)}\binom{\abs{H(I_1)}+s_2}{s_2}^{-1/2}\sum_{S_2\in\cols(\tilde A_1\setminus I_1,\tilde A_2)}\ket{Q(S_2)}.$$
Note that this state is not uniform in $I_1$, but is uniform in $S_2$ when we restrict to a particular $I_1$. Thus we measure the first register to get some $I_1$ with non-uniform probability that depends only on $\abs{H(I_1)}$. The remaining state is the uniform superposition 
$$\binom{\abs{\colset(\tilde A_1\setminus I_1,\tilde A_2)}}{s_2}^{-1/2}\sum_{S_2\in\binom{\colset(\tilde A_1\setminus I_1,\tilde A_2)}{s_2}}\ket{Q(S_2)}.$$

Now let $A_1=\tilde{A}_1\setminus I_1$, $A_2 = \tilde A_2$ and $A_3= \tilde{A_3} \cup I_1$. Then we have $\colset = \colset(\tilde{A}_1\setminus I_1,\tilde A_2) = \colset(A_1,A_2)$, so we have constructed the correct state for the tripartition $A_1,A_2,A_3$. Clearly, if $\{i,j,k\}$ is the unique $3$-collision, then $i\in \tilde A_1$, $j\in \tilde A_2$ and $k\in \tilde A_3$ with constant probability. It remains to consider whether $i \in I_1$. Although the distribution of $I_1$ is non-uniform, the distribution restricted to those $I_1$ with $H(I_1)=h$ is uniform for any fixed $h$, and it is easy to see that $\Pr(i\in I_1|H(I_1)=h)$ is $o(1)$ for any $h$. 

For more details, refer to the proof of Theorem \ref{th:setup-time}, which also proves an analogous statement for time complexity.
\end{proof}

Hereafter, we assume the above choice of partition $A_1,A_2,A_3$, and that if there is a unique $3$-collision $\{i,j,k\}$, then $i\in A_1$, $j\in A_2$ and $k\in A_3$.

\begin{theorem}[Costs of the update walk $\mathsf{S}',\frac{1}{\sqrt{\eps'}}(\frac{1}{\sqrt{\delta'}}\mathsf{U}'+\mathsf{C}')$]\label{th:query-nested-cost}
The update walk has query complexities
$\mathsf{S}'={O}(s_1)$ and $\frac{1}{\sqrt{\eps'}}(\frac{1}{\sqrt{\delta'}}\mathsf{U}'+\mathsf{C}')=\tilde{O}(\sqrt{{nm}/{s_1}}).$
\end{theorem}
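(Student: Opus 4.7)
The plan is to identify the update walk with a standard Johnson-graph Element-Distinctness-type walk on $N$-element sets, where $N = |(A_1\cup A_2)\setminus\mathrm{inds}(S_2)| = 2n/3-2s_2 = \Theta(n)$, and bound each of the four standard quantities $\mathsf{S}',\mathsf{U}',\mathsf{C}',\epsilon',\delta'$ in turn, so that the composite cost $\mathsf{S}' + \frac{1}{\sqrt{\epsilon'}}\bigl(\frac{1}{\sqrt{\delta'}}\mathsf{U}' + \mathsf{C}'\bigr)$ collapses to the claimed $O(s_1) + \tilde{O}(\sqrt{s_1}) = \tilde{O}(s_1) + \tilde{O}(\sqrt{nm/s_1})$, the second summand using $m = s_1^2 n_2/n^2 = \Theta(s_1^2/n)$, which yields $\sqrt{nm/s_1} = \Theta(\sqrt{s_1})$.

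\textbf{Setup, update, and checking.} For the setup cost $\mathsf{S}'$, I would prepare the uniform superposition over $\binom{(A_1\cup A_2)\setminus\mathrm{inds}(S_2)}{s_1}$ together with the associated queried values $Q(S_1)$ by the standard construction: build a uniform superposition over size-$s_1$ subsets of the available $\Theta(n)$ indices and query each of the $s_1$ elements, for a total of $O(s_1)$ queries; this gives $\mathsf{S}' = O(s_1)$. For $\mathsf{U}'$, one step of the walk on $J(N,s_1)$ swaps a single element of $S_1$ with an element of its complement; this takes only $O(1)$ queries, namely one query to obtain $\inp_j$ for the entering element $j$ and one to uncompute $\inp_i$ for the leaving element $i$. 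For $\mathsf{C}'$, once the data $Q(S_1)$ is stored, checking whether $|\cols(S_1)|\geq m$ is purely internal computation and requires no further queries, so $\mathsf{C}' = 0$.

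\textbf{Spectral gap and success probability.} The spectral gap of the Johnson graph $J(N,s_1)$ with $N=\Theta(n)$ and $s_1 = o(n)$ is $\delta' = \Omega(1/s_1)$, a standard calculation. For $\epsilon'$ I would argue that a uniformly random $S_1$ of size $s_1$ contains $\Omega(m)$ collisions from $\cols(A_1,A_2)$ with constant probability. The expected number of pairs $(i,j)\in A_1\times A_2$ with both endpoints in $S_1$ and $\inp_i=\inp_j$ is $\binom{s_1}{2}\cdot n_2/\binom{N}{2} = \Theta(s_1^2 n_2/n^2) = \Theta(m)$. Then, since the collisions are (by the simplifying assumption in Section~\ref{sec:prelim-3dist}) essentially disjoint pairs, the indicator variables $\mathbf{1}[(i,j)\subseteq S_1]$ for distinct pairs are only weakly correlated, and a Chebyshev / Paley--Zygmund estimate on the variance gives that $|\cols(S_1)|\geq m$ with constant probability for an appropriate constant hidden in the definition of $m$. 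Thus $\epsilon' = \Omega(1)$.

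\textbf{Combining.} Putting it together, $\frac{1}{\sqrt{\epsilon'}}\bigl(\frac{1}{\sqrt{\delta'}}\mathsf{U}'+\mathsf{C}'\bigr) = O(\sqrt{s_1}) = O(\sqrt{nm/s_1})$ by the arithmetic identity above. The main obstacle is the concentration argument for $\epsilon' = \Omega(1)$: since $m$ is defined as the expectation of $|\cols(S_1)|$ and the inequality $\Pr(X\geq \mathbb{E}[X])=\Omega(1)$ fails for general random variables, I would need to verify a bound on $\mathrm{Var}(|\cols(S_1)|)$ using the disjointness of typical collision pairs and adjust the hidden constant in $m$ if necessary, but this concentration is of the same standard flavor used in analyzing the Element Distinctness walk and is not a fundamental difficulty.
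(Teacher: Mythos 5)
Your proposal is correct and follows essentially the same route as the paper: the same Johnson graph $J(2n/3-2s_2,s_1)$ with marked set $\{S_1 : |\cols(S_1)|\geq m\}$, the same bounds $\mathsf{S}'=O(s_1)$, $\mathsf{U}'=O(1)$, $\mathsf{C}'=0$, $\delta'=\Omega(1/s_1)$, $\eps'=\Omega(1)$, combined identically via $m=\Theta(s_1^2 n_2/n^2)$ and $n_2=\Theta(n)$. The only difference is that you spell out the second-moment concentration argument behind $\eps'=\Omega(1)$ (including the need to take the threshold a constant factor below the expectation, since $\Pr(X\geq \mathbb{E}[X])=\Omega(1)$ is not automatic), a point the paper dispatches in one line by calling $m$ ``roughly the expected number'' of collision pairs.
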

\begin{proof}
Fix an arbitrary vertex $S_2\in \binom{\colset(A_1,A_2)}{s_2}$. We now analyze the update walk $P^{S_2}$. 
The walk is still on $J(2n/3-2s_2,s_1)$, so $\delta'=\Omega(\frac{1}{s_1})$, but in contrast to \cite{amb04}, a vertex is considered marked if it has at least $m$ collision pairs, and we have a lower bound of $n_2=\Omega(n)$ on the number of disjoint collision pairs. Since we defined $m={s_1^2n_2}/{n^2}$ as roughly the  expected number of collision pairs in a set of size $s_1$, we have $\eps'=\Omega(1)$. We still need to do the walk, both to amplify the success probability to inverse polynomial (which we could also have done by increasing $s_1$ by log factors) and more importantly, to implement the phase flip $\ket{x,0}\ket{D(x,0)}\mapsto -\ket{x,0}\ket{D(x,0)}$.  

\begin{description}
\item[\emph{Setup}:] We need $\mathsf{S}'={O}(s_1)$ queries to set up $\binom{2n/3-2s_2}{s_1}^{-1/2} \sum_{S_1\in\binom{(A_1\cup A_2)\setminus \inds(S_2)}{s_1}}\ket{Q(S_1)}$.

\item[\emph{Update}:] The update on $J(2n/3-2s_2,s_1)$ costs $O(1)$ queries. 

\item[\emph{Checking}:] The query complexity of checking is $0$, since we merely observe whether there are $m$ colliding pairs in $S_1$. 
\end{description}

We can thus compute $\frac{1}{\sqrt{\eps'}}(\frac{1}{\sqrt{\delta'}}\mathsf{U}'+\mathsf{C}')=\tilde{O}(\sqrt{\frac{mn^2}{s_1^2n_2}}\sqrt{s_1})=\tilde{O}(\sqrt{\frac{{nm}}{s_1}})$. 
\end{proof}

The following lemma tells us that we do not need to reverse the garbage part of the data.

\begin{lemma}\label{lem:garbage}
For all edges $(S_2,S_2')$, $\ket{\Garbage(S_2,S_2')}=\ket{\Garbage(S_2',S_2)}$.
\end{lemma}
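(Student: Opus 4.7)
The plan is to verify the claimed equality term-by-term as superpositions over $\ket{Q(\tilde S_1)}$. Writing out both sides of $\ket{\Garbage(S_2,S_2')}=\ket{\Garbage(S_2',S_2)}$ using the definition in (\ref{eq:garbage}), three of the four ingredients of each coefficient are manifestly symmetric in $(S_2,S_2')$: the summation set $\inds(S_2\cup S_2')$ is order-independent, so both sums range over the same family of $\tilde S_1$; the factor $\binom{n_2-s_2}{m}$ is a constant; and for each fixed $\tilde S_1$, $|\cols(\tilde S_1)|$ depends only on $\tilde S_1$ and the input. The only piece that could break the symmetry is the denominator $|M^{S_2}|$ on one side versus $|M^{S_2'}|$ on the other, so the claim reduces to showing $|M^{S_2}|=|M^{S_2'}|$.

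For this I would exploit the structural assumption from Section~\ref{sec:prelim-3dist} that any two $2$-collisions not both part of the $3$-collision are disjoint. Under the tripartition, the unique $3$-collision (if any) contributes exactly one pair to $\colset=\colset(A_1,A_2)$, namely the one formed by its representatives in $A_1$ and $A_2$; together with the disjointness assumption this implies the pairs of $\colset$ are pairwise disjoint as subsets of $[n]$. In other words, $\colset$ is a matching. Hence for every $T\in\binom{\colset}{s_2}$, the removal of $\inds(T)$ deletes exactly $2s_2$ distinct indices from $A_1\cup A_2$, and the induced collision graph on the remaining $2n/3-2s_2$ vertices is a matching of exactly $n_2-s_2$ edges together with $2n/3-2n_2$ isolated vertices.

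The count $|M^T|$ is, by definition, the number of $s_1$-subsets of $(A_1\cup A_2)\setminus\inds(T)$ containing at least $m$ edges of this collision graph, which is a function purely of the isomorphism class of the graph. Since this isomorphism class is the same for every $T\in\binom{\colset}{s_2}$ — same number of edges, same number of isolated vertices — we conclude $|M^{S_2}|=|M^{S_2'}|$, which gives the desired equality. The only nonroutine step is the combinatorial observation that the simplifying assumption forces $\colset$ to be a matching; once this is in hand, the rest is straightforward bookkeeping.
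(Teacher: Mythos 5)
Your proposal is correct and follows essentially the same route as the paper: both reduce the symmetry claim to showing $\abs{M^{S_2}}=\abs{M^{S_2'}}$ and then use the disjointness of the $2$-collisions to conclude that $\abs{\inds(S_2)}=2s_2$ and $\abs{\colset\setminus S_2}=n_2-s_2$ are independent of $S_2$. The only (cosmetic) difference is that you argue $\abs{M^{S_2}}$ depends only on the isomorphism class of the residual collision graph, whereas the paper writes down an explicit product of binomial coefficients; the underlying observation is identical.
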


\begin{proof}
Recall the definition of $\ket{\Garbage(S_2,S_2')}$ from (\ref{eq:garbage}):
$$\ket{\Garbage(S_2,S_2')}=\sum_{\substack{\tilde S_1\in\binom{(A_1\cup A_2)\setminus \inds(S_2\cup S_2')}{s_1-2m}}}\sqrt{\frac{\binom{n_2-s_2}{m}}{\binom{\abs{\cols(\tilde S_1)}+m}{m}\abs{M^{S_2}}}}\ket{Q(\tilde S_1)}.$$

To see that this is symmetric in $S_2$ and $S_2'$, we need only show that $\abs{M^{S_2}}=\abs{M^{S_2'}}$. 
We have 
$$\abs{M^{S_2}}=\left|\left\{S_1\in\binom{(A_1\cup A_2)\setminus\inds(S_2)}{s_1}:\abs{\cols(S_1)}\geq m\right\}\right|=\binom{\abs{\cols\setminus S_2}}{m}\binom{\abs{(A_1\cup A_2)\setminus \inds(S_2)}-2m}{s_1-2m}.$$
This holds because all collisions in $A_1\cup A_2$ are disjoint, and so choosing $m$ pairs from $\cols\setminus S_2$ gives $2m$ distinct indices. We can easily see that $\abs{\cols\setminus S_2}=n_2-s_2$, which is independent of $S_2$. Less trivially, since all collisions in $S_2$ are disjoint, we have $\abs{\inds(S_2)}=2s_2$ for all $S_2$, and so $\abs{(A_1\cup A_2)\setminus \inds(S_2)}=\abs{(A_1\cup A_2)}-2s_2$, again, independent of $S_2$. Thus we have $\abs{M^{S_2}}=\abs{M^{S_2'}}$, completing the proof.
\end{proof}

From this lemma it readily follows that the \textsc{Garbage Swap} requires no queries.

\begin{theorem}[\textsc{Garbage Swap} cost]\label{th:query-db-swap}
No queries are needed to perform the \textsc{Garbage Swap}, which for any $(S_2,S_2')$ performs the map
$$\ket{Q(S_2),Q(S_2')}\ket{\Garbage(S_2,S_2')}\mapsto \ket{Q(S_2'),Q(S_2)}\ket{\Garbage(S_2',S_2)}.$$
\end{theorem}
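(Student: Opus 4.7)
The plan is to reduce the \textsc{Garbage Swap} to an operation that manifestly makes no queries, by invoking Lemma~\ref{lem:garbage}. Writing out the target of the map, Lemma~\ref{lem:garbage} tells us that $\ket{\Garbage(S_2',S_2)} = \ket{\Garbage(S_2,S_2')}$, so the garbage register is identical on both sides of the transformation. Hence the map we actually need to implement is
\[
\ket{Q(S_2),Q(S_2')}\ket{\Garbage(S_2,S_2')} \mapsto \ket{Q(S_2'),Q(S_2)}\ket{\Garbage(S_2,S_2')},
\]
i.e., it only acts nontrivially on the first two registers.

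Since $Q(S_2)$ and $Q(S_2')$ each live in the same Hilbert space (encoding a size-$s_2$ set of triples $(i,j,\inp_i)$ in some canonical ordered form), the remaining action is simply the \textsc{SWAP} of these two registers. The \textsc{SWAP} gate is a fixed unitary on the data registers that is independent of the oracle $\inp$, so it requires no queries to implement; the query values already present in $Q(S_2),Q(S_2')$ get moved but not recomputed.

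Thus the entire \textsc{Garbage Swap} consists of applying \textsc{SWAP} to the first two registers, with zero queries to $\inp$. The only subtle step in the proof is the appeal to Lemma~\ref{lem:garbage}, which itself is the nontrivial fact making this framework work: it is precisely because the normalizing coefficient $\binom{n_2-s_2}{m}/(\binom{\abs{\cols(\tilde S_1)}+m}{m}\abs{M^{S_2}})$ depends on $S_2$ only through quantities that are symmetric in $S_2$ and $S_2'$ (the size of $\cols\setminus S_2$ and $(A_1\cup A_2)\setminus\inds(S_2)$) that no residual asymmetric garbage needs to be rewritten. Once that lemma is in hand, there is no obstacle to this theorem.
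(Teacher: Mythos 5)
Your proposal is correct and follows the paper's own (very terse) argument exactly: invoke Lemma~\ref{lem:garbage} to see that the garbage register is untouched, and observe that permuting the already-queried data $Q(S_2),Q(S_2')$ is input-independent and hence costs no queries. The only minor imprecision is calling the residual operation a literal \textsc{SWAP} gate --- in the actual encoding used later for time complexity, $Q(S_2')$ is stored via the symmetric differences $Q(S_2\setminus S_2')$ and $Q(S_2'\setminus S_2)$, so the re-encoding takes $\tilde{O}(m)$ insertions and deletions --- but this is irrelevant to the query count, which is the content of this theorem.
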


The \textsc{Local Diffusion with Garbage} also requires no queries, but is nontrivial to implement.

\begin{theorem}[\textsc{Local Diffusion with Garbage} cost]\label{th:query-ld-cost}
No queries are needed to perform the \textsc{Local Diffusion with Garbage}, which, for any $S_2$, performs the map
$$\ket{Q(S_2)}\ket{\pi^{S_2}(M^{S_2})}^0 \mapsto \frac{1}{\sqrt{\abs{\Gamma(S_2)}}}\sum_{S_2'\in\Gamma(S_2)}\ket{Q(S_2),Q(S_2')}\ket{\Garbage(S_2,S_2')},$$
where $\ket{\Garbage(S_2,S_2')}$ is defined in (\ref{eq:garbage}).

\end{theorem}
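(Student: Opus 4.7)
The plan is to implement the map purely by reversible classical bookkeeping on the query values that are already stored in the data registers, so that no oracle access is required. Every value $\inp_i$ for an index $i \in S_1$ or $i \in \inds(S_2)$ is already present in $Q(S_1)$ or $Q(S_2)$, and the predicate ``$(i,j)$ is a collision in $S_1$'' is computed from two stored values by comparison. In particular, $\cols(S_1)$ can be enumerated from $Q(S_1)$ with zero queries.

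Starting from the input $\ket{Q(S_2)} \otimes \tfrac{1}{\sqrt{\abs{M^{S_2}}}}\sum_{S_1 \in M^{S_2}}\ket{Q(S_1)}$, I would proceed in four query-free steps. \textbf{Step 1.} Compute $\cols(S_1)$ into a scratch register by scanning $Q(S_1)$. \textbf{Step 2.} Attach a uniform superposition over size-$m$ subsets $T' \subseteq \cols(S_1)$, producing coefficient $1/\sqrt{\binom{\abs{\cols(S_1)}}{m}}$ on $\ket{Q(S_1)}\ket{Q(T')}$. \textbf{Step 3.} Since all $2$-collisions in $A_1 \cup A_2$ are pairwise disjoint, $S_1$ splits as a disjoint union $\inds(T') \cup \tilde S_1$ with $\tilde S_1 \defeq S_1 \setminus \inds(T')$, and $Q(\inds(T'))$ is a substring of $Q(T')$, so we can uncompute it from $Q(S_1)$ to leave $\ket{Q(\tilde S_1)}\ket{Q(T')}$. \textbf{Step 4.} Attach a uniform superposition over size-$m$ subsets $T \subseteq S_2$, reversibly form $Q(S_2') = (Q(S_2)\setminus Q(T)) \cup Q(T')$ into a new register, and uncompute $T$ and $T'$ from $(Q(S_2), Q(S_2'))$ (they are determined by the symmetric differences).

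Because each step is reversible classical data manipulation on values already in memory, no queries are used. The main thing to verify is that the accumulated amplitudes reproduce the nonuniform state $\ket{\Garbage(S_2,S_2')}$ from (\ref{eq:garbage}). Using disjointness of collisions again, $\abs{\cols(S_1)} = \abs{\cols(\tilde S_1)} + m$, so the amplitude on the basis state indexed by $(S_2,S_2',\tilde S_1)$ ends up being
\[
\frac{1}{\sqrt{\abs{M^{S_2}}\,\binom{\abs{\cols(\tilde S_1)}+m}{m}\,\binom{s_2}{m}}}.
\]
Since $(T',T)$ is in bijection with $S_2' \in \Gamma(S_2)$ given $S_2$, and $\abs{\Gamma(S_2)} = \binom{s_2}{m}\binom{n_2-s_2}{m}$, this equals $\tfrac{1}{\sqrt{\abs{\Gamma(S_2)}}}\sqrt{\binom{n_2-s_2}{m} / (\binom{\abs{\cols(\tilde S_1)}+m}{m}\abs{M^{S_2}})}$, which is exactly the coefficient prescribed by (\ref{eq:garbage}).

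The main obstacle I anticipate is the bookkeeping of Step 3 and Step 4: one must be careful that every intermediate scratch register (the list $\cols(S_1)$, the subset $T$, the redundant copy of $Q(\inds(T'))$, etc.) is uncomputed cleanly so that the output factors exactly as $\ket{Q(S_2),Q(S_2')}\ket{\Garbage(S_2,S_2')}$ without spurious entanglement. Once this is arranged, the operation is manifestly a unitary built from query-free reversible primitives and subset-state preparations, so no queries are used.
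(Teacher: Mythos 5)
Your proposal is correct and follows essentially the same route as the paper's proof: attach uniform superpositions over the $m$ collision pairs to be removed from $S_2$ and the $m$ pairs $T'\subseteq\cols(S_1)$ to be added, strip $\inds(T')$ from $S_1$ to leave $\tilde S_1$, and use the disjointness of $2$-collisions to get $\abs{\cols(S_1)}=\abs{\cols(\tilde S_1)}+m$ so that the accumulated amplitude $\bigl(\abs{M^{S_2}}\binom{\abs{\cols(\tilde S_1)}+m}{m}\binom{s_2}{m}\bigr)^{-1/2}$ matches $\abs{\Gamma(S_2)}^{-1/2}$ times the coefficient in (\ref{eq:garbage}). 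The only differences are an immaterial reordering of the steps and your slightly more explicit treatment of uncomputing the scratch registers, which the paper handles implicitly via its encoding of $Q(S_2),Q(I),Q(J)$ as an encoding of $Q(S_2),Q(S_2')$.
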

\begin{proof}
We employ the following procedure to perform the \textsc{Local Diffusion with Garbage}.
\begin{enumerate}
\item Perform $\ket{Q(S_2),Q(S_1)}\mapsto \ket{Q(S_2),Q(S_1)}\binom{s_2}{m}^{-1/2}\sum_{I\in\binom{S_2}{m}}\ket{Q(I)}$. 
\item Perform $\ket{Q(S_2),Q(S_1)}\mapsto \ket{Q(S_2),Q(S_1)}\binom{\abs{\cols(S_1)}}{m}^{-1/2}\sum_{J\in \binom{\cols(S_1)}{m}}\ket{Q(J)}$.
\item Perform $\ket{Q(S_1)}\ket{Q(J)}\mapsto \ket{Q(\tilde S_1)}\ket{Q(J)}$, where $\tilde S_1=S_1\setminus\inds(J)$. 
\end{enumerate}
Here $I$ represents collision pairs to be removed from $S_2$ and $J$ represents collision pairs to be added. 
It is clear that each of these operations has query complexity $0$. 

Now we show the correctness of this procedure. Recall that $S_1\in M^{S_2}$ if and only if $S_1$ contains $m$ collisions, i.e., $\abs{\cols(S_1)}\geq m$, so $\ket{Q(S_2)}\ket{\pi^{S_2}(M^{S_2})}^0$ is
$$\ket{Q(S_2)}\frac{1}{\sqrt{\abs{M^{S_2}}}}\sum_{\substack{S_1\in M^{S_2}}}\ket{Q(S_1)} = \ket{Q(S_2)}\frac{1}{\sqrt{\abs{M^{S_2}}}}\sum_{S_1\in\binom{(A_1\cup A_2)\setminus \inds(S_2)}{s_1}:\abs{\cols(S_1)}\geq m}\ket{Q(S_1)}.$$
After performing the above procedure, we get the state
$$\ket{Q(S_2)}
\frac{1}{\sqrt{\abs{M^{S_2}}}}\sum_{\substack{S_1\in\binom{(A_1\cup A_2)\setminus \inds(S_2)}{s_1}:\\\abs{\cols(S_1)}\geq m}}\!\!\!\!\!\ket{Q(\tilde S_1)}
\frac{1}{\sqrt{\binom{s_2}{m}}}\sum_{\substack{I\in\binom{S_2}{m}}}
\ket{Q(I)}
\frac{1}{\sqrt{\binom{\abs{\cols(S_1)}}{m}}} \sum_{\substack{J\in \binom{\cols(S_1)}{m}}}\!\!\!\!\!\ket{Q(J)}.$$
Note that $\cols(S_1)=\cols(\tilde S_1)\cup J$. To see this, we must appeal to the fact that all collisions in $A_1\times A_2$ are disjoint, by assumption, so for each collision pair $(i,j)\in J$, removing $i$ and $j$ from $S_1$ removes the collision pair $(i,j)$ and no other collision pair from $\cols(S_1)$. Next, we can see that $\abs{\cols(\tilde S_1)\cup J}=\abs{\cols(\tilde S_1)}+m$, since $J\cap \tilde \cols(S_1) = \emptyset$ and $\abs{J}=m$. 
Thus, $\abs{\cols(S_1)}=\abs{\cols(\tilde S_1)}+m$, and we can rewrite the state as
$$\ket{Q(S_2)}\binom{s_2}{m}^{-1/2}\binom{n_2-s_2}{m}^{-1/2}\!\!\!\!\!\sum_{\substack{I\in\binom{S_2}{m}\\ J\in \binom{\cols\setminus S_2}{m}}}\!\!\!\!\!\!\ket{Q(I)}\ket{Q(J)}\!\!\sum_{\substack{\tilde S_1\in\binom{(A_1\cup A_2)\setminus \inds(S_2\cup J)}{s_1-2m}}}
\!\!\!\!\alpha_{\tilde S_1}(S_2,(S_2\cup J)\setminus I)\ket{Q(\tilde S_1)},$$
where 
\[
\alpha_{\tilde S_1}(S_2,(S_2\cup J)\setminus I)=\sqrt{\frac{\binom{n_2-s_2}{m}}{\binom{\abs{\cols(\tilde S_1)}+m}{m}\abs{M^{S_2}}}}.
\]
We now simply note that the neighbours of any $S_2\in X$ are exactly $(S_2\cup J)\setminus I$ for $I\in\binom{S_2}{m}$ and $J\in\binom{\colset\setminus S_2}{m}$. Furthermore, for such a neighbour $S_2'=(S_2\cup J)\setminus I$, $Q(S_2),Q(I),Q(J)$ encodes $Q(S_2),Q(S_2')$. Finally, for such an $S_2'$, we have $S_2\cup J=S_2\cup S_2'$. Thus, we are left with the desired state. 
\end{proof}

\begin{corollary}[\textsc{Local Diffusion} and \textsc{Database Swap} cost $\mathsf{T}$]\label{cor:query-ld}
The family $(P^{S_2},M^{S_2},d^{S_2})_{S_2\in X}$ implements the \textsc{Local Diffusion} and \textsc{Database Swap} of $(P,Q)$ with no queries.
\end{corollary}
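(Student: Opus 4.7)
The plan is to verify the two conditions of the Definition of ``implements the \textsc{Local Diffusion} and \textsc{Database Swap}'' by direct unpacking, with essentially all of the work already done in Theorems~\ref{th:query-db-swap} and~\ref{th:query-ld-cost}. Nothing new needs to be computed; the corollary is just a matching exercise against the abstract framework of Section~\ref{sec:rec}.

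First I would check that the data structure matches. Setting $C = Q$ in the framework of Section~\ref{sec:rec}, the prescribed coin-free data is $\ket{{D}(S_2,0)} = \ket{Q(S_2),0}\ket{\pi^{S_2}(M^{S_2})}^0_{d^{S_2}}$ and the coin-dependent data is $\ket{{D}(S_2,S_2')} = \ket{Q(S_2),Q(S_2')}\ket{\Garbage(S_2,S_2')}$. These agree verbatim with the data structure defined at the end of ``The Data'' paragraph, so the data structure of the implementation is literally the one we are already carrying, and no reconciliation is required. In particular, the $\ket{\Garbage(S_2,S_2')}$ appearing here is the same state given in equation~(\ref{eq:garbage}).

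Next I would invoke Theorem~\ref{th:query-ld-cost} to obtain the \textsc{Local Diffusion with Garbage} map with query cost $0$, and Theorem~\ref{th:query-db-swap} to obtain the \textsc{Garbage Swap} map with query cost $0$. Together these witness both clauses of the definition with $\mathsf{T} = 0$, proving the corollary.

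I do not anticipate any real obstacle: the two nontrivial ingredients, namely the explicit three-step construction of the diffusion in Theorem~\ref{th:query-ld-cost} (choose $I \in \binom{S_2}{m}$, choose $J \in \binom{\cols(S_1)}{m}$, then strip $\inds(J)$ out of $S_1$), and the symmetry $\ket{\Garbage(S_2,S_2')} = \ket{\Garbage(S_2',S_2)}$ from Lemma~\ref{lem:garbage} that lets the swap be implemented by a plain \textsc{SWAP} on the coin and value registers, are already established upstream. The only thing to be careful about is ensuring the ``data structure of the implementation'' in the definition is indeed the one appearing in both theorems, which is immediate from the equalities above.
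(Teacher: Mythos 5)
Your proposal is correct and matches the paper's own proof, which simply observes that the corollary is immediate from Theorems~\ref{th:query-db-swap} and~\ref{th:query-ld-cost}. The extra care you take in checking that the data structure of the implementation coincides with the one defined in the ``The Data'' paragraph is a reasonable elaboration but does not change the argument.
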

\begin{proof}
This is immediate from Theorems \ref{th:query-db-swap} and \ref{th:query-ld-cost}.
\end{proof}

The checking cost is immediate, since we can use Grover search to look for an element of $A_3$ that collides with any of the stored 2-collisions.

\begin{theorem}[Checking cost $\mathsf{C}$]\label{th:query-checking-cost}
We can implement the checking reflection with $\mathsf{C}=\tilde{O}(\sqrt{n})$ queries.
\end{theorem}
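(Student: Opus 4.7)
The plan is to reduce the checking reflection to a Grover search over $A_3$. Recall that $S_2$ is marked iff there exists $k \in A_3$ and $(i,j) \in S_2$ such that $\inp_k = \inp_i = \inp_j$. The data register always carries $\ket{Q(S_2)}$, so the set of target values $V(S_2) \defeq \{\inp_i : (i,j)\in S_2\}$ is available without any additional queries to the oracle.

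With $V(S_2)$ in hand, define a Boolean function $f_{S_2}: A_3 \to \{0,1\}$ by $f_{S_2}(k) = 1$ iff $\inp_k \in V(S_2)$. Each evaluation of $f_{S_2}$ costs exactly one query to the oracle (to obtain $\inp_k$); the subsequent comparison against the $s_2$ values stored in $V(S_2)$ is a query-free classical lookup (so its query cost is zero, even though its time cost is not, which is why it will cost more in Section~\ref{sec:time}). The vertex $S_2$ is marked iff $\bigvee_{k\in A_3} f_{S_2}(k) = 1$, an OR over $|A_3| = n/3$ inputs.

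Now I would apply Grover search, controlled on the $\ket{Q(S_2)}$ register, to the function $f_{S_2}$ on $A_3$. This determines whether $S_2 \in M$ with bounded error in $O(\sqrt{n})$ queries. To turn this decision procedure into the desired reflection, I would use the standard ``compute–phase flip–uncompute'' pattern: coherently compute the marked bit into an ancilla, apply a $Z$ gate to that ancilla, and then uncompute. Because Grover only succeeds with bounded error, I would first drive the error down to $1/\mathrm{poly}(n)$ via $O(\log n)$ rounds of majority voting (or, equivalently, $O(\log n)$ rounds of amplitude amplification on a fixed-point variant); this keeps the overall error in the walk negligible while multiplying the query count by only a polylogarithmic factor. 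The total is $\tilde{O}(\sqrt{n})$ queries.

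There is no real obstacle here beyond the usual bookkeeping: the only point worth checking is that the reflection is applied cleanly, i.e., the ancilla used by Grover is returned to $\ket{0}$ on all branches, so that interference in the outer walk is preserved. This is ensured by the uncompute step and by amplifying the success probability well beyond the accuracy demanded elsewhere in the algorithm (cf.\ the accuracy $O(1/\sqrt{\eps\delta\eps'\delta'})$ required in the proof of Theorem~\ref{th:nested-update}).
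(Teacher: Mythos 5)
Your proposal is correct and takes essentially the same approach as the paper: the paper also implements the checking reflection by Grover searching over $A_3$ for an index whose value collides with one of the $2$-collisions stored in $Q(S_2)$, at one query per iteration, for a total of $\tilde{O}(\sqrt{n})$ queries. The extra detail you give on error reduction and the compute--phase-flip--uncompute pattern is consistent with how the paper handles such reflections elsewhere.
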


We now have all necessary ingredients to prove the main theorem.

\begin{proof}[Proof of Theorem \ref{th:query-main}]

We apply Theorem \ref{th:nested-update} to compute the cost of our nested-update quantum walk algorithm, giving (up to log factors)
\begin{align*}
&\textstyle\mathsf{S}_C+\mathsf{S}'+\frac{1}{\sqrt{\eps}}\left(\frac{1}{\sqrt{\delta}}\left(\frac{1}{\sqrt{\eps'}}\left(\frac{1}{\sqrt{\delta'}}\mathsf{U}'+\mathsf{C}'\right)+\mathsf{T}\right)+\mathsf{C}\right) \\
&\quad\textstyle=s_1+s_2\sqrt{\frac{n}{s_1}}+s_1+\sqrt{\frac{n_2}{s_2}}\left(\sqrt{\frac{s_2}{m}}\left(\frac{\sqrt{nm}}{\sqrt{s_1}}+0\right)+\sqrt{n}\right) \\
&\quad\textstyle= s_1+s_2\sqrt{\frac{n}{s_1}}+\sqrt{\frac{n_2n}{s_1}}+\sqrt{\frac{n_2n}{s_2}}
=\tilde{O}\left(s_1+s_2\sqrt{\frac{n}{s_1}}+\frac{n}{\sqrt{s_1}}+\frac{n}{\sqrt{s_2}}\right)
\end{align*}
using the cost calculations from Theorems \ref{th:query-setup}, \ref{th:query-nested-cost}, and \ref{th:query-checking-cost} and Corollary \ref{cor:query-ld}. Setting $s_1=n^{5/7}$ and $s_2=n^{4/7}$ gives query complexity $\tilde{O}(n^{5/7})$.
\end{proof}

\section{Time Complexity of 3-Distinctness}\label{sec:time}

In this section we prove the following theorem.

\begin{theorem}\label{th:main}
The time complexity of $3$-Distinctness is $\tilde{O}(n^{5/7})$. 
\end{theorem}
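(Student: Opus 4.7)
The plan is to show that each operation analyzed in Section~\ref{sec:query} can be implemented in time that exceeds its query cost by at most polylogarithmic factors, so that the overall $\tilde{O}(n^{5/7})$ bound carries over from queries to time. The key enabler is a canonical quantum data structure (such as the radix tree or skip-list-based structure used in~\cite{amb04,JKM12}) that represents a subset of $[n]$ together with its queried values in such a way that (i) insertions, deletions, and value lookups each take polylogarithmic time; (ii) the physical state depends only on the underlying set, not on the history of operations; and (iii) auxiliary counts (e.g.\ the number of colliding pairs presently in the set) are maintained incrementally. Property (ii) makes the \textsc{Garbage Swap} essentially free: by Lemma~\ref{lem:garbage} the states $\ket{\Garbage(S_2,S_2')}$ and $\ket{\Garbage(S_2',S_2)}$ are identical, so swapping the two $Q(\cdot)$ registers is a single SWAP of equally-shaped registers.

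With this data structure in hand, I would first reprove the setup bound (Theorem~\ref{th:query-setup}) with time complexity $\tilde{O}(s_1 + s_2\sqrt{n/s_1})$. The three stages of the setup proof translate directly: building the superposition over $s_1$-subsets of $\tilde A_1$ with their values costs $\tilde{O}(s_1)$ time; each Grover iteration searching $\tilde A_2$ for an index in $H(I)$ reduces to one value-lookup in the data structure for $I$ and hence costs polylog time per step, giving $\tilde{O}(s_2\sqrt{n/s_1})$ total; and the reversible partitioning $\ket{Q(I),Q(J)}\mapsto\ket{Q(I_1)}\ket{\{(i,j,\inp_i)\}}$ is a polylog-time merge on sorted structures. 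The probabilistic analysis of $A_1,A_2,A_3$ is unchanged.

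Next I would verify the inner walk bounds of Theorem~\ref{th:query-nested-cost} in the time model. The setup $\mathsf{S}'$ is $\tilde{O}(s_1)$ using the same structure; each inner update step on $J(2n/3-2s_2,s_1)$ is a swap of one index accompanied by one insertion, one deletion, and one query, all polylog; and the inner checking predicate ``$|\cols(S_1)|\geq m$'' is read off from the incrementally-maintained collision counter in constant time. The \textsc{Local Diffusion with Garbage} decomposes, as in the proof of Theorem~\ref{th:query-ld-cost}, into (a) a uniform superposition over $m$-subsets of $S_2$, (b) a uniform superposition over $m$-subsets of $\cols(S_1)$, and (c) the deletion of $\inds(J)$ from $S_1$. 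Step (a) is standard, step (c) is $m$ deletions in the canonical structure, and step (b) reduces to sampling a uniform subset from the explicit list of colliding pairs maintained inside the data structure; each costs polylog time per element, and $m=O(s_1^2/n)$ so the total is polylogarithmic in $n$. The checker $\mathsf{C}$ is a Grover search over $A_3$ for a value already present among the stored $2$-collisions; with a value-indexed lookup this runs in $\tilde{O}(\sqrt{n})$ time as in Theorem~\ref{th:query-checking-cost}.

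Assembling these pieces into Theorem~\ref{th:nested-update} with time-complexity costs (and observing that the $\mathsf{T}$ contribution is now polylogarithmic rather than zero, which does not affect the asymptotics) reproduces the arithmetic in the proof of Theorem~\ref{th:query-main} verbatim with $s_1=n^{5/7}$ and $s_2=n^{4/7}$, giving time complexity $\tilde{O}(n^{5/7})$ and hence Theorem~\ref{th:main}. The principal technical obstacle is the design of a single data structure supporting every required primitive simultaneously in polylog time -- most delicately, uniform sampling of $m$-subsets of the on-the-fly collision set $\cols(S_1)$ and the history-independence property required for Lemma~\ref{lem:garbage} to turn the \textsc{Garbage Swap} into a register SWAP; once these are in place the rest of the argument is bookkeeping that mirrors Section~\ref{sec:query}.
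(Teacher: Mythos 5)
Your overall strategy is exactly the paper's: re-derive each cost bound from Section~\ref{sec:query} in the time model using Ambainis's history-independent skip-list structure (Lemma~\ref{lem:data}), augmented with a table for $Q(\cols(S_1))$ and a counter, then rerun the arithmetic of Theorem~\ref{th:query-main}. However, there is a quantitative error in your accounting of the nested-update operations. You claim that the \textsc{Local Diffusion with Garbage} and the \textsc{Garbage Swap} cost polylogarithmic time because ``$m = O(s_1^2/n)$.'' With the final parameter choice $s_1 = n^{5/7}$ we have $m = \Theta(n^{3/7})$, which is polynomial, not polylogarithmic: creating the superposition over $m$-subsets of $S_2$ and of $\cols(S_1)$, deleting $\inds(J)$ from $S_1$, and converting the skip-table for $Q(S_2)$ into one for $Q(S_2')$ (the swap is \emph{not} a register SWAP, since the natural encoding produced by the diffusion stores $S_2$ in full and $S_2'$ only as the symmetric difference, so one must perform $m$ insertions and $m$ deletions) each require $\Theta(m)$ data-structure operations. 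So $\mathsf{T} = \tilde{O}(m)$, not $\tilde{O}(1)$. Your conclusion survives only because $\frac{1}{\sqrt{\eps}}\frac{1}{\sqrt{\delta}}\mathsf{T} = \sqrt{n_2/s_2}\cdot\sqrt{s_2/m}\cdot m = \sqrt{n_2 m} = \tilde{O}(s_1) = \tilde{O}(n^{5/7})$, i.e., the $\mathsf{T}$ term sits exactly at the target complexity rather than below it; this calculation needs to be done explicitly, since ``polylog, hence negligible'' is not a valid justification here.

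A second, smaller omission: you never say how the random tripartition $A_1, A_2, A_3$ is stored and queried. The setup's Grover search over $\tilde A_2$ and the checking step's Grover search over $A_3$ both require deciding membership of an arbitrary index in $\tilde{O}(1)$ time, and an explicitly stored random partition would cost $\Omega(n)$ space and time to prepare. The paper resolves this by drawing the partition from a $3$-wise independent family of hash functions $f:[n]\to[n]$ (plus the small set $I_1$ stored in a skip-table), which gives $\tilde{O}(1)$ membership tests and preserves the $\Omega(1)$ probability that the unique $3$-collision respects the partition. Without some such derandomized representation, the claimed time bounds for setup and checking do not follow.
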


\noindent This follows fairly straightforwardly from the quantum walk described in Section \ref{sec:query}. The only remaining task is to describe how we can encode the sets of queried indices and pairs of indices so that all necessary operations, such as inserting an element in a set or removing an element from a set, can be done in poly-logarithmic time. We use the same data structure that was used to obtain a tight upper bound on the time complexity of Element Distinctness \cite{amb04}. After describing the necessary properties of this data structure and how we apply it to our walk, we explain how each of the operations described in Section \ref{sec:query} can be done time-efficiently using this encoding.

The following lemma describes properties of the data structure that we use to encode edges of our walk and their data.  We refer to this data structure as a \emph{skip-list}.

\begin{lemma}[\cite{amb04}]\label{lem:data}
There exists a data structure for storing a set of items of the form $(z,\inp)$ (the $\inp$ values need not be unique) that allows the following operations to be performed in worst case time complexity $O(\log^4(n+q))$:
insert an item; 
delete an item;
look up an item by its $\inp$ value; or
create a superposition of the elements stored.
The data structure storing a set $S$ is a unique encoding of $S$. 
\end{lemma}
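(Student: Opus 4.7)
The plan is to build a \emph{skip-list} data structure whose shape depends only on the set $S$ of items stored, not on the history of insertions and deletions. I would fix once and for all a deterministic hash function $h:[n]\times[q]\to\{0,1\}^{O(\log(n+q))}$ (pairwise-independent for analysis purposes, but regarded as a fixed function for the data structure itself) and, for each item $(z,\inp)$, define its \emph{height} in the skip list as $1$ plus the number of leading zeros of $h(z,\inp)$. Because the height is a deterministic function of the item alone, the entire skip-list shape (including horizontal pointers at each level) is determined by $S$ together with a canonical ordering of items (say, lexicographic on $(\inp,z)$). This yields the required uniqueness: for any two execution histories that produce the same $S$, the final register contents are bit-for-bit identical.

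Next I would describe how each operation is realized in poly-log time. The underlying classical picture is the standard skip-list picture: $O(\log(n+q))$ expected levels, with search, insert, and delete each performing $O(\log(n+q))$ pointer traversals in worst case \emph{after conditioning on a high-probability event} about the hash function, which I would make worst-case by truncating at $O(\log(n+q))$ levels and arranging the tail deterministically. To make the structure quantum-accessible, items are stored in a large array of cells, each cell of size $O(\log(n+q))$ bits, and pointers are cell indices; each pointer manipulation therefore costs $O(\log(n+q))$ time because we must coherently read and update $O(\log(n+q))$-bit registers. Composing $O(\log(n+q))$ pointer operations per dictionary operation, and accounting for the $O(\log(n+q))$ rebalancing / hashing subroutines that touch $O(\log(n+q))$-bit auxiliary registers, gives the quoted $O(\log^4(n+q))$ bound; the exponent $4$ is a comfortable upper bound absorbing the constant number of such nested logarithmic factors.

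The lookup-by-$\inp$ operation is immediate: follow the top-level skip-list pointers guided by the lexicographic key, dropping down levels as usual. The delicate operation is \textbf{superposition creation}: I need a unitary taking $\ket{S}\ket{0}$ to $\ket{S}\frac{1}{\sqrt{|S|}}\sum_{(z,\inp)\in S}\ket{z,\inp}$. I would augment every node with the size of the subtree (or skip-list segment) rooted below it; this counter is maintained in $O(\log(n+q))$ additional time per insert/delete. To produce the superposition I would descend from the root, at each node applying a two-outcome rotation whose angle is computed from the stored subtree sizes, so that the amplitude on each leaf is exactly $1/\sqrt{|S|}$. The descent has depth $O(\log(n+q))$, each step costs $O(\log(n+q))$ for the arithmetic on $\log$-bit counters, so this is well within the budget. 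Uniqueness of the descent is guaranteed by uniqueness of the skip-list shape.

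The main obstacle, and the subtle part of the argument, is enforcing \emph{bit-level} uniqueness of the encoding in the presence of freed memory cells during deletions: if a cell is allocated, freed, and then reallocated in a different order, the contents of unused memory may differ between two histories that nominally produce the same set. My plan is to address this by maintaining a canonical memory layout, e.g., storing items in cell indices determined by a deterministic function of the hash value (a cuckoo-like two-table scheme with fixed tie-breaking, rebuilt deterministically when it exceeds capacity). Coherently \emph{erasing} a freed cell to a canonical zero state before continuing ensures that the ambient memory, not just the logical skip list, is a deterministic function of $S$. Verifying that this cleanup step itself runs in $O(\log^4(n+q))$ time, and that the cuckoo rebuild occurs rarely enough to be amortized into the worst-case bound (via standard de-amortization), is the only place where a careful calculation is needed; everything else is routine once the deterministic-height skip list is in place.
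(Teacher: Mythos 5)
First, note that the paper does not actually prove this lemma: it is imported wholesale from \cite{amb04} (it is Ambainis's hash-table-plus-skip-list construction), so the only fair comparison is with that source. Your reconstruction is broadly aligned with it --- heights determined by hashing the item itself so that the structure is history-independent, counters at internal nodes driving controlled rotations for superposition generation, and a canonical memory layout so that the encoding is unique at the bit level. These are indeed the essential ingredients.

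There is, however, a genuine gap in your treatment of the worst-case guarantee. For a truly fixed (deterministic) height function there exist sets $S$ on which the skip list degenerates --- e.g., every stored item gets height $1$ --- and then lookup requires traversing a linked list of length $|S|$; truncating the structure at $O(\log(n+q))$ levels does nothing to repair a degenerate \emph{bottom} level, and no deterministic ``arrangement of the tail'' can restore logarithmic search for all $S$ without reintroducing history-dependence. The resolution in \cite{amb04} is that the hash functions are drawn from a $k$-wise independent family once, at the start of the algorithm; for each basis state $\ket{S}$ in the superposition, the probability over that single draw that some operation exceeds the $O(\log^4(n+q))$ budget is polynomially small, computations are cut off at the budget, and the resulting perturbation of the state is absorbed into the algorithm's overall error analysis. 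So the lemma's ``worst case'' is worst case over the operation sequence but only holds with high probability over the initial randomness --- a caveat your write-up elides by calling the function ``fixed.'' A second, related problem is your cuckoo-hashing rebuild: a rebuild triggered ``when the table exceeds capacity'' occurs at history- and set-dependent times, which threatens both the requirement that every branch of the superposition execute the identical circuit for the identical number of steps and the unique-encoding property itself; \cite{amb04} sidesteps this by using fixed-size buckets whose overflow is simply another low-probability error event rather than a trigger for restructuring.
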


\paragraph{Encoding an Edge and its Data}  We now describe how to encode an edge and its data.  
These states have the form either $\ket{S_2,S_2',D(S_2,S_2')}$, where $\ket{D(S_2,S_2')}$ is a superposition over basis states $|Q(S_2),Q(S_2'),Q(\tilde S_1)\>$ for $(S_2,S_2')\in E$ and $\tilde S_1\subset (A_1\cup A_2)\setminus \inds( S_2\cup S_2')$ (and recall that $Q(S_2),Q(S_2')$ automatically encodes $S_2,S_2'$); or $\ket{S_2,0,D(S_2,0)}$, where $\ket{D(S_2,0)}$ is a superposition over basis states $\ket{Q(S_2),Q(S_1),Q(S_1')}$ for $S_2\in X$, and $(S_1,S_1')\in E^{S_2}\cup (V^{S_2}\times\{0\})$.
Strictly speaking, we previously defined $\ket{D(S_2,0)}$ using  $\ket{\pi^{S_2}(M^{S_2})}^0_{d^{S_2}}$ instead of $\ket{\pi^{S_2}(M^{S_2})}_{d^{S_2}}$, that is, as a superposition of $\ket{Q(S_2),Q(S_1),0}$ for $S_2\in X$, and $S_1\in V^{S_2}$. 
However, we must also consider how to encode states of the nested update walk, which do not generally have $0$ in the coin register.

We begin by encoding the triple of sets $(Q(S_2),Q(S_1),Q(S_1'))$. We store each of $Q(S_2)$ and $Q(S_1)$ in a skip-table. To store $Q(S_1')$ for $S_1'\neq 0$, we simply store both $Q(S_1)\setminus Q(S_1')$ and $Q(S_1')\setminus Q(S_1)$, each of which is a single queried index $(i,\inp_i)$. This already encodes the three sets, but we add additional structure to speed up certain tasks. We store $Q(\cols(S_1))$, the set of $2$-collisions in $S_1$, in another skip-table.  We also keep a counter of the size of this set so that we can easily check whether $S_1$ is marked.

Now we describe how we encode the triple of sets $(Q(S_2),Q(S_2'),Q(\tilde S_1))$. We store each of $Q(S_2)$ and $Q(\tilde S_1)$ in a skip-table. To store $Q(S_2')$, we store each of $Q(S_2)\setminus Q(S_2')$ and $Q(S_2')\setminus Q(S_2)$ in a skip-table. We also store with $Q(\tilde S_1)$ 
an additional skip-table containing $Q(\colset(\tilde S_1))$, with a counter encoding its size. We store this simply because this is what is left over from the encoding of $Q(S_1)$ after we perform \textsc{Local Diffusion}.

It is now clear from Lemma \ref{lem:data} that we can perform the following operations in poly-logarithmic time: insert an element to $S_1$, insert an element to $S_2$, delete an element from $S_1$, delete an element from $S_2$, look up an element in $S_1$, and look up an element in $S_2$. In addition, we can perform each of these operations in superposition.

\paragraph{Cost Analysis} 
We now analyze the cost of all operations implemented in Section \ref{sec:query}.

Since we are now concerned with time complexity, we need some efficient way to store and compute the partition $A_1,A_2,A_3$. We use the following notion to efficiently represent a random subset of $[n]$.
\begin{definition}
A family $F$ of functions $f:[n]\to [\ell]$ is said to be \emph{$k$-wise independent} if
for any distinct $i_1,\ldots,i_k\in [n]$, the distribution $(f(i_1),\ldots,f(i_k))$ is identical to the uniform distribution over $[\ell]^k$.
\end{definition}
When $n=\ell$ is a prime power, a simple example of a $k$-wise independent functions is the family of all polynomials of degree $k-1$ over the finite field $\mathrm{GF}(n)$~\cite{wc81}.
Each such polynomial can be represented using $O(k\log n)$ bits and evaluated using
$O(k)$ additions and multiplications over $\mathrm{GF}(n)$. More efficient constructions exist, but the polynomial construction suffices here.
It can be extended to any integer $n$ by allowing a small statistical distance from
the distribution $(f(i_1),\ldots,f(i_k))$ to the uniform distribution over $[n]^k$.

For simplicity, we now assume that we have at our disposal a (perfect) $3$-wise independent family $F$ of functions $f:[n]\to[n]$.

\begin{theorem}[Outer walk setup cost $\mathsf{S}_C$]\label{th:setup-time}
We can construct,  in time $\tilde{O}(s_1+s_2\sqrt{{n}/{s_1}})$, a state $\binom{n_2}{s_2}^{-1/2}\sum_{S_2\in \binom{\colset(A_1,A_2)}{s_2}}\ket{Q(S_2)}$ for $A_1,A_2$ random variables such that
\begin{enumerate} 
\item $\abs{A_1}=\abs{A_2}=\frac{n}{3}$;
\item $A_1,A_2,A_3\defeq[n]\setminus (A_1\cup A_2)$ is a tripartition of $[n]$;
\item  if $\inp$ has a unique $3$-collision $\{i,j,k\}$, $\Pr(i\in A_1,j\in A_2,k\in A_3)=\Omega(1)$;
\item the space complexity of storing the partition $(A_1,A_2,A_3)$ is $\tilde{O}(s_1)$; and
\item the time complexity of determining to which of $A_1$, $A_2$ or $A_3$ an index $i\in [n]$ belongs is $\tilde{O}(1)$.
\end{enumerate}
\end{theorem}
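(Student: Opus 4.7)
My plan is to mirror the query-complexity proof of Theorem~\ref{th:query-setup} step by step, replacing each oracle access by a unit-time query and each set manipulation by a skip-list operation from Lemma~\ref{lem:data}. The two genuinely new ingredients are a compact, efficiently-queryable representation of the random tripartition (needed for properties~4 and~5) and an implementation of the Grover search whose checking predicate performs a skip-list lookup on the stored queries.

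First I would sample $f:[n]\to[n]$ from the $3$-wise independent family $F$ and use $f$ to induce the pre-partition $(\tilde A_1,\tilde A_2,\tilde A_3)$ of sizes $n/3+s_1-s_2,\ n/3,\ n/3-s_1+s_2$ by a deterministic rule based on $f(\cdot)$ (e.g., cutting the $f$-ordering at prescribed thresholds, with a small correction to guarantee the exact sizes). Since $f$ fits in $O(\log n)$ bits and a single $f$-evaluation runs in $\tilde{O}(1)$ time, this gives properties~4 and~5 for $(\tilde A_1,\tilde A_2,\tilde A_3)$ and, after the final measurement, also for the partition stored as $(f,I_1)$. By $3$-wise independence the joint distribution of the parts containing any three fixed distinct indices is close to uniform on $[3]^3$, so the argument of Theorem~\ref{th:query-setup} for property~3 carries over verbatim: the fixed $3$-collision $\{i,j,k\}$ lands in $\tilde A_1\times\tilde A_2\times\tilde A_3$ with probability $\Omega(1)$, and conditional on that event the probability that $i$ ends up in $I_1$ rather than $I_2$ is $o(1)$.

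Next I would construct the superposition over $Q(I)$ for $I\in\binom{\tilde A_1}{s_1}$ by sequentially inserting $s_1$ uniformly random indices of $\tilde A_1$ together with their queried values into a skip-list, costing $\tilde{O}(s_1)$ time by Lemma~\ref{lem:data}. I would then perform $s_2$ rounds of Grover search over $\tilde A_2$, where each Grover oracle call tests in $\tilde{O}(1)$ time, via a skip-list lookup on $Q(I)$, whether a candidate $j\in \tilde A_2$ has $\inp_j$ matching some stored value. Since $|H(I)|=\Omega(s_1)$ with probability $1-o(1)$ over $I$ (we restrict to that subspace as in the query proof), each round finds a new collision index in $\tilde{O}(\sqrt{n/s_1})$ time, for a total of $\tilde{O}(s_2\sqrt{n/s_1})$. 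A further $\tilde{O}(s_1+s_2)$ skip-list operations reorganize $(Q(I),Q(J))$ into the encoding $(Q(I_1),Q(S_2))$ described in Section~\ref{sec:query}, after which measuring $I_1$ produces the classical partition $(A_1,A_2,A_3)$ and leaves behind the promised uniform superposition over $\binom{\colset(A_1,A_2)}{s_2}$.

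The main obstacle I anticipate is ensuring that every intermediate transformation is coherent and leaves no auxiliary garbage, so that the measurement of $I_1$ really yields a uniform superposition rather than an entangled remainder; this is guaranteed by the unique-encoding property of the skip-list in Lemma~\ref{lem:data}, but it forces care in the way each insertion and the $I\mapsto(I_1,I_2,J)$ split are implemented. Summing all contributions yields total time $\tilde{O}(s_1+s_2\sqrt{n/s_1})$, matching the query bound of Theorem~\ref{th:query-setup} up to polylogarithmic factors, and properties~1--5 follow as in the query-complexity proof.
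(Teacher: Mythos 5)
Your proposal follows essentially the same route as the paper's proof: a $3$-wise independent $f$ thresholded to define $(\tilde A_1,\tilde A_2,\tilde A_3)$ and stored together with $I_1$ to give properties 4 and 5, skip-lists for $Q(I)$ and $Q(S_2)$, Grover search over $\tilde A_2$ with a skip-list-lookup predicate conditioned on $\abs{H(I)}=\Omega(s_1)$, the reversible split $\ket{Q(I),Q(J)}\mapsto\ket{Q(I_1),Q(S_2)}$, and measurement of $I_1$ with the same conditional-uniformity argument for property 3. The one coherence subtlety you flag but do not pin down is that the $s_2$ sequential Grover searches implicitly record the order in which the elements of $J$ are found (not just the set $J$, which is what the skip-list's unique encoding handles); the paper resolves this by explicitly uncomputing that order register at cost $\tilde{O}(s_2)$, which is the concrete fix for the garbage issue you identify as the main obstacle.
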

\begin{proof} This proof is similar to that of Theorem \ref{th:query-setup}, but we include significantly more detail. 
Let $f\in F$ be a $3$-wise independent function, and define $\tilde A_1$, $\tilde A_2$ and $\tilde A_3$ by $f(i)\leq \frac{n}{3}+s_1-s_2\Leftrightarrow i\in \tilde A_1$, $\frac{n}{3}+s_1-s_2 < f(i)\leq \frac{2n}{3}+s_1-s_2 \Leftrightarrow i\in\tilde A_2$, and $f(i)>\frac{2n}{3}+s_1-s_2\Leftrightarrow i\in \tilde A_3$. Then $\tilde A_1,\tilde A_2,\tilde A_3$ is a partition of $[n]$ with $\abs{\tilde A_1}=\frac{n}{3}+s_1-s_2$, $\abs{\tilde A_2}=\frac{n}{3}$, and $\abs{\tilde A_3}=\frac{n}{3}-s_1+s_2$.

If $\inp$ has a unique $3$-collision $\{i,j,k\}$, then the $3$-wise independence of $f$ implies that
\begin{align*}
&  \Pr_f\(i\in \tilde A_1,j\in\tilde A_2,k\in\tilde A_3\)\\
 &\quad= \Pr_f\(f(i)\leq \frac{n}{3}+s_1-s_2,\frac{n}{3}+s_1-s_2 < f(j)\leq \frac{2n}{3}+s_1-s_2,f(k)>\frac{2n}{3}+s_1-s_2\)\\
&\quad\geq \left(\frac{1}{3}-o(1)\right)^3=\frac{1}{27}-o(1).
\end{align*}
We assume that this holds when constructing the starting state. Otherwise, our construction fails and we try again. Let $\tilde{n}_2 \defeq \abs{\colset(\tilde A_1,\tilde A_2)}$. We can assume that $\tilde{n}_2\in\Omega(n)$ for the same reason we can always assume that $\chi$ has $\Omega(n)$ $2$-collisions. 

To begin, we create a uniform superposition 
$
  \binom{\abs{\tilde A_1}}{s_1}^{-1/2}\sum_{I\in\binom{\tilde A_1}{s_1}}\ket{Q(I)}
$
of sets of $s_1$ indices drawn from $\tilde{A}_1$, stored in a skip-table, and query these indices.  This uses $s_1$ queries and $s_1$ insertions, for a total time complexity of $\tilde{O}(s_1)$.

For $I\subset \tilde A_1$, let $H(I)\defeq\{j\in \tilde A_2:\exists i\in I,\inp_j=\inp_i\}$. Next, we search $\tilde A_2$ for indices in $H(I)$, assuming that $H(I)$ has size at least $\Omega(\frac{\tilde n_2s_1}{n})$. The following lemma justifies this assumption.

\begin{lemma}
Let $I$ be a uniformly random subset of $\tilde A_1$ of size $s_1$. Then $\Pr_I(\abs{H(I)} \leq \frac{\tilde{n}_2}{n} s_1)< o(1)$. 
\end{lemma}
\begin{proof}
The random variable $\abs{H(I)}$ has a hypergeometric distribution with mean $\mu=\frac{s_1\tilde{n}_2}{2n/3+s_1-s_2}= \Theta(s_1)$. Using tail inequalities \cite[eq.~14]{ska11} we have, for any constant $c\geq 1$,
\[
\Pr(\abs{H(I)}\leq \tfrac{1}{c}\mu)\leq \exp\left(-2\left(\frac{\mu(1-\tfrac{1}{c})}{s_1}\right)^2s_1\right) 
\leq e^{-\Theta(s_1)}=o(1). \qedhere
\]
\end{proof}

\noindent Thus we can then restrict our attention to the part of the state
with $\abs{H(I)}\geq \epsilon s_1$
for some constant $\epsilon \leq \frac{\tilde n_2}{n}$, as this part has $1-o(1)$ of the weight. 
We can then perform the mapping 
$$\binom{|\tilde A_1|}{s_1}^{-1/2}\!\!\!\sum_{I\in\binom{\tilde A_1}{s_1}:\abs{H(I)}\geq \epsilon s_1}\ket{Q(I)} \mapsto \binom{|\tilde{A}_1|}{s_1}^{-1/2}\!\!\!\sum_{I\in\binom{\tilde A_1}{s_1}:\abs{H(I)}\geq \epsilon s_1}\ket{Q(I)}\binom{\abs{H(I)}}{s_2}^{-1/2}\!\!\!\sum_{J\in\binom{H(I)}{s_2}}\ket{Q(J)}$$
using $s_2$ applications of Grover search for a new element of $H(I)$. Each search requires $\tilde{O}(\sqrt{{n}/{\abs{H(I)}}})=\tilde O(\sqrt{n/s_1})$ iterations.  As we find elements, we insert them into a skip-table, also separately recording the order in which we find the indices. To check if some $i$ is in $H(I)$, but not already found, we 
\begin{itemize}
\item look up $i$ in the skip-table of indices already found;
\item query $\inp_i$;
\item compute $f(i)$; and
\item look up $\inp_i$ in $Q(I)$.
\end{itemize}
Each of these operations has time complexity $\tilde{O}(1)$, for a total cost of $\tilde{O}(1)$ per iteration.
The total cost of the $s_2$ rounds of search is $\tilde{O}(s_2\sqrt{n/s_1})$. Finally, we must uncompute the order in which we found the indices of $J$. For each $J$, we have the state $\ket{Q(J)}2^{-s_2/2}\sum_{\sigma\in \mathcal{S}_{s_2}}\ket{\sigma(J)}$, where $\mathcal{S}_n$ is the symmetric group on $n$ symbols. We can uncompute the order register in cost $\tilde{O}(s_2)$, completing the desired mapping.

Let $I_1$ be the elements of $I$ for which we did not find a collision, and $I_2$ those elements of $I$ for which we did find a collision. We can reversibly convert $\ket{Q(I),Q(J)}$ to $\ket{Q(I_1),\{(i,j,\inp_i):i\in I_2,j\in J,\inp_i=\inp_j\}}$, where both sets are stored in a skip-table. We call the second set $Q(S_2)$. To accomplish this mapping, we do the following $s_2$ times, once for each $j\in J$: 
\begin{itemize}
\item look up $\inp_j$ in $Q(I)$ to find $(i,\inp_i=\inp_j)$;
\item insert $(i,j,\inp_i)$ into $Q(S_2)$; and
\item delete $(i,\inp_i)$ from $I$ and $(j,\inp_i)$ from $J$.
\end{itemize}
What remains in $Q(I)$ after performing these steps is exactly $Q(I_1)$. Each repetition costs $\tilde{O}(1)$, for a total cost of $\tilde{O}(s_2)$. Note that we can delete $(i,\inp_i)$ every time because all $2$-collisions in $\tilde{A}_1\times \tilde{A}_2$ are disjoint (by assumption), so $\abs{I_1}=s_1-s_2$. We also have $\abs{H(I_1)}=\abs{H(I)}-s_2$, again because all $2$-collisions in $\tilde A_1\times \tilde{A}_2$ are disjoint. 
Thus, after performing the full mapping, the part of the state under consideration is 
$$\binom{|\tilde A_1|}{s_1}^{-1/2}\sum_{I_1\in\binom{\tilde A_1}{s_1-s_2}}\binom{\abs{H(I_1)}+s_2}{s_2}^{-1/2}\ket{Q(I_1)}\sum_{S_2\in\binom{\colset(\tilde A_1\setminus I_1,\tilde A_2)}{s_2}}\ket{Q(S_2)}.$$
Measuring the first register, containing some $\ket{Q(I_1)}$, gives the state
$$\binom{|\colset(\tilde A_1\setminus I_1, \tilde A_2)|}{s_2}^{-1/2}\sum_{S_2\in\binom{\colset(\tilde A_1\setminus I_1,\tilde A_2)}{s_2}}\ket{Q(S_2)}$$
for some $I_1$ with probability at least $1-o(1)$. 
Adding up the total cost, we find $\mathsf{S}_C=\tilde{O}(s_1+s_2\sqrt{{n}/{s_1}}+s_2)=\tilde{O}(s_1+s_2\sqrt{{n}/{s_1}})$, since $n>s_1$. 

This state is the correct starting state for the partition $A_1=\tilde A_1\setminus I_1, A_2=\tilde A_2,A_3=\tilde A_3\cup I_1$, which is clearly a tripartition with $\abs{A_1}=\abs{A_2}=\abs{A_3}=n/3$. Furthermore, for a $3$-collision $\{i,j,k\}$, assuming $i\in\tilde A_1,j\in\tilde A_2,k\in\tilde A_3$ (which happens with constant probability), the only way we can fail to have $i\in A_1,j\in A_2, k\in A_3$ is if $i\in I_1$. Although $I_1$ is not uniformly distributed, $\Pr(I_1|H(I_1)=h)$ is uniform for any $h$.  Furthermore, $\Pr(i\in I_1|H(I_1)=h)=o(1)$ for any fixed $h$, since $\abs{I_1} \ll n$. Thus, we have $\Pr_{A_1,A_2,A_3}(i\in A_1,j\in A_2,k\in A_3)=\Omega(1)$.

Finally, to store the tripartition $A_1,A_2,A_3$, we need to keep $f$, 
 as well as $I_1$, which we store in a skip-table. This takes space $\tilde{O}(1)+\tilde{O}(\abs{I_1})=\tilde{O}(s_1)$. To compute which of $A_1$, $A_2$, or $A_3$ contains an index $i$, we first compute $f(i)$, and then (possibly) look up $i$ in $I_1$, each of which costs $\tilde{O}(1)$. 
\end{proof}

\begin{theorem}[Costs of the update walk $\mathsf{S}',\frac{1}{\sqrt{\eps'}}(\frac{1}{\sqrt{\delta'}}\mathsf{U}'+\mathsf{C}')$]\label{th:time-nested-cost}
The update walk has 
time complexities
$\mathsf{S}'=\tilde{O}(s_1)$ and $\frac{1}{\sqrt{\eps'}}(\frac{1}{\sqrt{\delta'}}\mathsf{U}'+\mathsf{C}')=\tilde{O}(\sqrt{{{nm}}/{s_1}}).$
\end{theorem}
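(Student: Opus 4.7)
The plan is to show that the query-complexity analysis in Theorem \ref{th:query-nested-cost} lifts to the same (up to polylog factors) time complexity, by exhibiting a skip-list-based encoding of the update-walk state that supports each of the needed operations in $\tilde{O}(1)$ time. Since the walk $P^{S_2}$ on $J(2n/3-2s_2,s_1)$ has the same spectral gap $\delta'=\Omega(1/s_1)$ and marked fraction $\eps'=\Omega(1)$ as in the query analysis, it suffices to bound $\mathsf{S}'$, $\mathsf{U}'$, and $\mathsf{C}'$ in the time model.

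For the setup cost $\mathsf{S}'$, I would construct the uniform superposition $\binom{2n/3-2s_2}{s_1}^{-1/2}\sum_{S_1}\ket{Q(S_1)}$ over $s_1$-subsets of $(A_1\cup A_2)\setminus \inds(S_2)$ one element at a time, performing $s_1$ queries and $s_1$ insertions into the skip-list encoding of $Q(S_1)$. Using Theorem~\ref{th:setup-time} parts (4)--(5), checking that a sampled index lies in $A_1\cup A_2$ but not in $\inds(S_2)$ takes $\tilde{O}(1)$ time (a lookup in $S_2$'s skip-list plus a membership test against the tripartition). While building $S_1$, I would simultaneously populate the auxiliary skip-list $Q(\cols(S_1))$ and its size counter: each new element $i$ triggers at most one lookup of $\inp_i$ in the current $Q(S_1)$, and at most one insertion into $Q(\cols(S_1))$ with counter increment. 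All steps cost $\tilde{O}(1)$ by Lemma~\ref{lem:data}, giving $\mathsf{S}'=\tilde{O}(s_1)$.

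For the update step on $J(2n/3-2s_2,s_1)$, a single transition swaps one index $i\in S_1$ with some $j\in (A_1\cup A_2)\setminus(\inds(S_2)\cup S_1)$. Using the edge encoding $(Q(S_1)\setminus Q(S_1'),Q(S_1')\setminus Q(S_1))$ described in the section preamble, the \textsc{Local Diffusion} on the inner walk reduces to: query $\inp_j$, delete $(i,\inp_i)$ from the skip-list of $Q(S_1)$, insert $(j,\inp_j)$, and update $Q(\cols(S_1))$ by looking up $\inp_i$ and $\inp_j$ in the updated $Q(S_1)$ (adjusting the counter of $\abs{\cols(S_1)}$ accordingly). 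Every operation is $\tilde{O}(1)$ by Lemma~\ref{lem:data}, so $\mathsf{U}'=\tilde{O}(1)$. Checking whether $S_1\in M^{S_2}$ reduces to comparing the stored counter against $m$, giving $\mathsf{C}'=\tilde{O}(1)$.

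Putting these together, $\frac{1}{\sqrt{\eps'}}(\frac{1}{\sqrt{\delta'}}\mathsf{U}'+\mathsf{C}')=\tilde{O}(\sqrt{s_1})=\tilde{O}(\sqrt{nm/s_1})$ (using $m=s_1^2 n_2/n^2$ and $n_2=\Theta(n)$). The main subtlety to handle carefully is the maintenance of the collision set $Q(\cols(S_1))$ and its counter across updates: one must verify that after removing $i$ and inserting $j$, the only collision pairs in $\cols(S_1)$ that can change are those involving $i$ or $j$, which follows because in $A_1\times A_2$ all $2$-collisions are disjoint by the preprocessing assumption; hence a constant number of skip-list operations suffices per step. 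With this encoding the query-complexity bounds of Theorem~\ref{th:query-nested-cost} become the claimed time bounds.
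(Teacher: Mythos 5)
Your proposal is correct and follows essentially the same route as the paper: reuse the parameters $\delta'=\Omega(1/s_1)$, $\eps'=\Omega(1)$ from the query analysis, implement the inner walk as in the time-efficient Element Distinctness algorithm of \cite{amb04} with an extra skip-table for $Q(\cols(S_1))$ and a size counter maintained in $\tilde{O}(1)$ per insertion/deletion (using disjointness of the $2$-collisions), and check membership in $M^{S_2}$ by reading the counter. The resulting bounds $\mathsf{S}'=\tilde{O}(s_1)$, $\mathsf{U}'=\tilde{O}(1)$, $\mathsf{C}'=O(1)$ and the final computation $\tilde{O}(\sqrt{s_1})=\tilde{O}(\sqrt{nm/s_1})$ match the paper's.
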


\begin{proof}
This follows from Theorem \ref{th:query-nested-cost} and our encoding of a triple $(Q(S_2),Q(S_1),Q(S_1'))$. The implementation is nearly identical to the time-efficient Element Distinctness algorithm of \cite{amb04}, except that we store  
an extra skip-table containing the set $Q(\cols(S_1))$. However, insertion and deletion may still be performed in poly-logarithmic time. To insert $i$ into $S_1$, we must look up $\inp_i$ in $Q(S_1)$ to see if we have a new collision in $Q(S_1)$, i.e., if there is some $(j,\inp_i)$ already in $Q(S_1)$ such that $(i,j)\in (A_1\times A_2)\cup (A_2\times A_1)$. If there is such a $j$, then we insert $(i,j,\inp_i)$ into $Q(\cols(S_1))$ if $(i,j)\in A_1\times A_2$ or $(j,i,\inp_i)$ into $Q(\cols(S_1))$ else. Finally, we  insert $(i,\inp_i)$ into $Q(S_1)$. 
This involves a constant number of skip-table insertions and lookups, so its cost is still poly-logarithmic. We can delete some $i\in S_1$ by running this operation in reverse. Thus, the update and setup cost are clearly the same as \cite{amb04}. From the proof of Theorem \ref{th:query-nested-cost}, we have
$$\mathsf{S}'=\tilde{O}(s_1)\quad\quad\mathsf{U}'=\tilde{O}(1)\quad\quad\delta'=\Omega(1/s_1)\quad\quad\eps'=\Omega(1).$$

To check if $S_1$ is marked, we simply read the counter storing the size of $Q(\cols(S_1))$ and check if it is at least $m$, in time $\mathsf{C}'=O(1)$. Thus, we have
\[
\mathsf{S}'=\tilde{O}(s_1)\quad\quad \frac{1}{\sqrt{\eps'}}\left(\frac{1}{\sqrt{\delta'}}\mathsf{U}'+\mathsf{C}'\right)=\tilde{O}\left(\sqrt{\frac{nm}{s_1}}\right). \qedhere
\]
\end{proof}

Since $\ket{\Garbage(S_2,S_2')}=\ket{\Garbage(S_2',S_2)}$ for all edges $(S_2,S_2')$ by Lemma \ref{lem:garbage}, we have the following. 

\begin{theorem}[\textsc{Garbage Swap} cost]\label{th:db-swap-cost}
We can implement the \textsc{Garbage Swap} in time $\tilde{O}(m)$.
\end{theorem}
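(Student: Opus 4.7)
The plan is to exploit Lemma \ref{lem:garbage}, which gives $\ket{\Garbage(S_2,S_2')}=\ket{\Garbage(S_2',S_2)}$. This means the garbage register is already in the correct state after the swap and requires no modification whatsoever: whatever basis states $\ket{Q(\tilde S_1)}\ket{Q(\colset(\tilde S_1))}$ it is supported on depend only on the unordered pair $\{S_2,S_2'\}$ (through $\inds(S_2\cup S_2')$ and $\abs{\cols(\tilde S_1)}$), so the entire burden of the \textsc{Garbage Swap} reduces to converting the encoding of $(Q(S_2),Q(S_2'))$ into the encoding of $(Q(S_2'),Q(S_2))$.

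Recall the chosen edge encoding: the primary skip-table stores $Q(S_2)$, while $Q(S_2')$ is represented implicitly via two auxiliary skip-tables holding $Q(S_2)\setminus Q(S_2')$ and $Q(S_2')\setminus Q(S_2)$. Because $(S_2,S_2')$ is an edge of $J(n_2,s_2,m)$, each of these symmetric-difference sets has size exactly $m$. I would first apply a register SWAP that exchanges the two auxiliary skip-tables; this is $\tilde{O}(1)$ and, after relabeling, they now represent the correct symmetric differences for the target pair $(Q(S_2'),Q(S_2))$. Note that swapping these two small skip-tables also takes care of the fact that the ordered basis $\ket{Q(S_2),Q(S_2')}$ must become $\ket{Q(S_2'),Q(S_2)}$ in the encoding.

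It remains to replace the contents of the primary skip-table: delete from it the $m$ elements now stored in the (post-swap) ``$Q(S_2)\setminus Q(S_2')$'' auxiliary table, and insert the $m$ elements from the other auxiliary table. By Lemma \ref{lem:data}, each skip-table insertion, deletion, and lookup costs $\tilde{O}(1)$ worst-case time and can be executed coherently in superposition, so iterating over the $2m$ operations yields the claimed $\tilde{O}(m)$ total time. The main thing to verify is that this procedure really is a unitary on the whole edge space and leaves the garbage register untouched; both follow because every branch performs the same fixed number of deterministic skip-table operations determined by the basis values of $Q(S_2)$ and the two auxiliary registers, and because Lemma \ref{lem:garbage} guarantees the garbage amplitudes are unchanged under the swap $S_2\leftrightarrow S_2'$.
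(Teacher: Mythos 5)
Your proof is correct and follows essentially the same route as the paper's: invoke Lemma \ref{lem:garbage} to leave the garbage register untouched, then use the edge structure $\abs{S_2\setminus S_2'}=\abs{S_2'\setminus S_2}=m$ together with the skip-table encoding of Lemma \ref{lem:data} to convert $\ket{Q(S_2),Q(S_2')}$ into $\ket{Q(S_2'),Q(S_2)}$ via $m$ insertions and $m$ deletions at $\tilde{O}(1)$ each. The extra detail you give about swapping the two auxiliary symmetric-difference tables and about reversibility is consistent with, and slightly more explicit than, the paper's argument.
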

\begin{proof}
The \textsc{Garbage Swap} is the operation that acts, for any edge $(S_2,S_2')$, as 
$$\ket{Q(S_2),Q(S_2')}\ket{\Garbage(S_2,S_2')}\mapsto \ket{Q(S_2'),Q(S_2)}\ket{\Garbage(S_2',S_2)}.$$
By Lemma \ref{lem:garbage}, we need only consider the cost of $\ket{Q(S_2),Q(S_2')}\mapsto \ket{Q(S_2'),Q(S_2)}$. Recall that $\ket{Q(S_2),Q(S_2')}$ is encoded as $\ket{Q(S_2)}\ket{Q(S_2\setminus S_2')}\ket{Q(S_2'\setminus S_2)}$, with each of the three parts encoded as a skip-table. Since $(S_2,S_2')$ is an edge, we have $\abs{S_2\setminus S_2'}=\abs{S_2'\setminus S_2}=m$. Thus, we can perform the mapping to $\ket{Q(S_2')}\ket{Q(S_2'\setminus S_2)}\ket{Q(S_2\setminus S_2')}$ 
 by performing $m$ insertions and $m$ deletions on $Q(S_2)$ to get $Q(S_2')$. 
\end{proof}

\begin{theorem}[\textsc{Local Diffusion with Garbage} cost]\label{th:ld-cost}
We can implement the \textsc{Local Diffusion with Garbage} with time complexity $\tilde{O}(m)$. 
\end{theorem}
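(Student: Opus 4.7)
The plan is to time-efficiently implement the three-step procedure from the proof of Theorem \ref{th:query-ld-cost}, using $O(m)$ skip-list operations that each cost poly-logarithmic time by Lemma \ref{lem:data}. The input register holds the encoding of $\ket{Q(S_2)}\ket{\pi^{S_2}(M^{S_2})}^0$: $Q(S_2)$ as a skip-list, and $Q(S_1)$ and $Q(\cols(S_1))$ as skip-lists together with a counter recording $|\cols(S_1)|$. The target edge-data encoding stores $Q(S_2)$, $Q(S_2\setminus S_2')$, $Q(S_2'\setminus S_2)$, $Q(\tilde S_1)$, and $Q(\cols(\tilde S_1))$ as skip-lists with the counter for $|\cols(\tilde S_1)|$.

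For Step 1, I will construct coherently, in an auxiliary skip-list register $Q(I)$, the uniform superposition over $m$-subsets of $S_2$. Using $m$ rounds of the uniform-element-sampling primitive on $Q(S_2)$ from Lemma \ref{lem:data}, one can first produce a superposition over ordered $m$-tuples of distinct elements; then, reversibly sort the tuple and uncompute the sorted register against the canonical skip-list encoding of $I$. This is the standard Ambainis-style coherent sampling technique used for Element Distinctness, and it fits within $\tilde O(m)$ time. Step 2 is identical in structure, operating on the skip-list $Q(\cols(S_1))$ to produce $Q(J)$.

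For Step 3, I will delete the $2m$ indices $\inds(J)$ from $Q(S_1)$ (producing $Q(\tilde S_1)$) and the $m$ pairs $J$ from $Q(\cols(S_1))$ (producing $Q(\cols(\tilde S_1))$), decrementing the counter by $m$. By Lemma \ref{lem:data} each deletion is $\tilde O(1)$, so the total cost of this step is $\tilde O(m)$. Relabeling the auxiliary registers $Q(I)$ and $Q(J)$ as $Q(S_2\setminus S_2')$ and $Q(S_2'\setminus S_2)$ for the neighbour $S_2'=(S_2\cup J)\setminus I$ then places the state in the target edge-data encoding.

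The main obstacle I anticipate is the subset-sampling of Steps 1 and 2: the naive operation ``sample a uniform element of $S_2\setminus I$'' is not unitary, so one must rely on the uniqueness of the skip-list encoding (Lemma \ref{lem:data}) combined with reversible sorting to coherently collapse the $m!$ insertion orderings of each $m$-subset into a single uniform amplitude. Once this is arranged within the poly-logarithmic per-round budget of the skip-list, the full \textsc{Local Diffusion with Garbage} has time complexity $\tilde O(m)$.
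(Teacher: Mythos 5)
Your proposal is correct and follows essentially the same route as the paper: implement the three steps of the query-complexity procedure (Theorem \ref{th:query-ld-cost}) using $O(m)$ skip-list insertions, deletions, lookups, and superposition accesses, each costing $\tilde O(1)$ by Lemma \ref{lem:data}, for a total of $\tilde O(m)$. The extra detail you give on coherently sampling $m$-subsets (collapsing the $m!$ orderings via the unique encoding) is the same order-uncomputation technique the paper itself uses in its setup procedure, so this is an elaboration rather than a divergence.
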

\begin{proof}
We consider each of the three steps from the proof of Theorem \ref{th:query-ld-cost}. 
In step 1, we create a superposition over sets of $m$ values in $Q(S_2)$ using $m$ superposition accesses to the skip-table storing $Q(S_2)$. In step 2 we create a superposition over sets of $m$ values in $Q(\cols(S_1))$ using $m$ superposition accesses to the skip-table storing $Q(\cols(S_1))$. Finally, in step 3 we perform $m$ lookups in the skip-table storing $Q(J)=Q(S_2'\setminus S_2)$ and $2m$ deletions from the skip-table storing $Q(S_1)$. The total cost of this is $\tilde{O}(m)$. 

It is also clear from the proof of Theorem \ref{th:query-ld-cost} that we move from a superposition of correctly encoded triples $(Q(S_2),Q(S_1),0)$ (where the 0 corresponds to the coin of $S_1$) to a superposition of correctly encoded triples $(Q(S_2),Q(S_2'),Q(\tilde S_1))$. 
\end{proof}

\begin{corollary}[\textsc{Local Diffusion} and \textsc{Database Swap} cost $\mathsf{T}$]\label{cor:ld}
The family $(P^{S_2},M^{S_2},d^{S_2})_{S_2\in X}$ implements the \textsc{Local Diffusion} and \textsc{Database Swap} of $(P,Q)$ with time complexity $\mathsf{T}=\tilde{O}(m)$.
\end{corollary}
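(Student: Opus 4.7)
The plan is to observe that this corollary is an immediate consequence of the two preceding theorems combined with the definition given in Section~\ref{sec:rec} of what it means to implement the \textsc{Local Diffusion} and \textsc{Database Swap} of $(P,Q)$ with cost $\mathsf{T}$. Recall that this definition requires exactly two things: the \textsc{Local Diffusion with Garbage} map and the \textsc{Garbage Swap} map, each implementable within cost $\mathsf{T}$.

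First, I would invoke Theorem~\ref{th:ld-cost}, which already establishes that the \textsc{Local Diffusion with Garbage}, for the specific choice of garbage state $\ket{\Garbage(S_2,S_2')}$ defined in (\ref{eq:garbage}), can be carried out in time $\tilde{O}(m)$ using the three-step procedure from Theorem~\ref{th:query-ld-cost} together with the skip-table encoding of triples $(Q(S_2),Q(S_1),Q(S_1'))$. Second, I would invoke Theorem~\ref{th:db-swap-cost}, which establishes that the \textsc{Garbage Swap} can also be carried out in time $\tilde{O}(m)$, using that Lemma~\ref{lem:garbage} guarantees $\ket{\Garbage(S_2,S_2')}=\ket{\Garbage(S_2',S_2)}$, so that only the symmetric-difference part of the encoding of $(Q(S_2),Q(S_2'))$ must be modified, requiring $m$ insertions and $m$ deletions.

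Adding these two contributions gives total time $\tilde{O}(m) + \tilde{O}(m) = \tilde{O}(m)$, which matches the claimed $\mathsf{T}$. Since no step of the proof involves any further oracle access, queries, or new analysis, there is no real obstacle here — the work has already been done in Theorems~\ref{th:db-swap-cost} and~\ref{th:ld-cost}, and the corollary is a one-line combination. The only thing to double-check is that the data structure referenced in both theorems is consistent with the data structure $\ket{D(S_2,0)}=\ket{Q(S_2),0}\ket{\pi^{S_2}(M^{S_2})}_{d^{S_2}}^0$ and $\ket{D(S_2,S_2')}=\ket{Q(S_2),Q(S_2')}\ket{\Garbage(S_2,S_2')}$ from Section~\ref{sec:query}, which it is by construction.
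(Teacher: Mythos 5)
Your proposal is correct and matches the paper's own proof, which likewise derives the corollary immediately from Theorems~\ref{th:db-swap-cost} and~\ref{th:ld-cost} via the definition of implementing the \textsc{Local Diffusion} and \textsc{Database Swap} with cost $\mathsf{T}$. The extra consistency check on the data structure is a reasonable addition but not needed beyond what those theorems already establish.
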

\begin{proof}
This is immediate
from Theorems \ref{th:db-swap-cost} and \ref{th:ld-cost}.
\end{proof}

\begin{theorem}[Checking cost $\mathsf{C}$]\label{th:checking-cost}
We can implement the checking reflection in  time $\mathsf{C}=\tilde{O}(\sqrt{n})$.
\end{theorem}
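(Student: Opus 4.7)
The plan is to implement the checking reflection by Grover search over $A_3$, looking for an index $k\in A_3$ such that $\chi_k$ appears as a stored collision value in $Q(S_2)$, and then to verify that each Grover iteration can be carried out in poly-logarithmic time using the skip-list encoding.

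First I would recall that a vertex $S_2$ is marked iff there exists $(i,j)\in S_2$ and $k\in A_3$ with $\chi_i=\chi_j=\chi_k$. Since there is at most one $3$-collision by assumption, a marked $S_2$ has exactly one such $k$, so Grover search over the $|A_3|=n/3$ candidates is the natural subroutine. A standard Grover-based reflection (e.g.\ amplitude amplification applied to the search predicate) implements the desired reflection with $O(\sqrt{n})$ calls to the predicate, and wrapping it in the usual error-reduction yields a reflection with inverse-polynomial error in time $\tilde O(\sqrt{n})$ times the per-iteration cost.

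Next I would analyze the per-iteration cost of the predicate ``$k\in A_3$ and $\exists(i,j)\in S_2$ with $\chi_i=\chi_k$.'' Given $k\in[n]$, I would (i) decide whether $k\in A_3$ using the succinct representation of the tripartition from Theorem~\ref{th:setup-time}, which takes time $\tilde O(1)$; (ii) make one oracle query to obtain $\chi_k$; and (iii) look up $\chi_k$ in the skip-table storing $Q(S_2)$ from Lemma~\ref{lem:data}, which takes time $\tilde O(1)$. Because $Q(S_2)$ is stored as part of the data register $\ket{D(S_2,y)}$ alongside the edge, these lookups act controlled on the edge register and leave $\ket{D(S_2,y)}$ unchanged apart from the ancilla used for the predicate, so they are compatible with the quantum walk. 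Each iteration therefore costs $\tilde O(1)$.

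Combining, the total time to implement the checking reflection is $\tilde O(\sqrt{n})\cdot\tilde O(1)=\tilde O(\sqrt{n})$, as claimed. The only mild subtlety is ensuring that the Grover-based predicate test is a clean reflection on the data register (i.e.\ uncomputes the queried value and the temporary comparison ancilla), but this is standard and costs only a constant factor; no novel argument is required beyond the skip-list properties already provided and the $\tilde O(1)$ membership test for the partition. There is no significant obstacle here, as the checking step does not interact with the nested update structure and reuses the time-efficient data structures already established.
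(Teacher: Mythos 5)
Your proposal is correct and follows essentially the same route as the paper: Grover search over $A_3$ for an index colliding with a stored pair, with each iteration costing $\tilde{O}(1)$ via one query and one skip-table lookup of $\inp_k$ in $Q(S_2)$. The extra remarks about the partition membership test and uncomputing ancillas are consistent with the paper's (briefer) argument.
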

\begin{proof}
To check if a vertex $S_2$ is marked, we search for an index $k\in A_3$ such that there exists $(i,j)\in S_2$ such that $\{i,j,k\}$ is a $3$-collision. Each time we check if a particular $k$ has this property, we query $k$ and look up $\inp_k$ in $Q(S_2)$ in time $\tilde{O}(1)$. 
\end{proof}

Theorem \ref{th:main} now follows. All costs are the same as their query complexities from Section \ref{sec:query}, with the exception of $\mathsf{T}=\tilde{O}(m)=\tilde{O}(\frac{s_1^2}{n})$, but this does not change the asymptotic complexity. Plugging the values from Theorems \ref{th:setup-time}, \ref{th:time-nested-cost}, and \ref{th:checking-cost} and Corollary \ref{cor:ld} into the nested update cost expression from Theorem \ref{th:nested-update}, we have
\begin{align*}
&\textstyle\mathsf{S}_C+\mathsf{S}'+\frac{1}{\sqrt{\eps}}\left(\frac{1}{\sqrt{\delta}}\left(\frac{1}{\sqrt{\eps'}}\left(\frac{1}{\sqrt{\delta'}}\mathsf{U}'+\mathsf{C}'\right)+\mathsf{T}\right)+\mathsf{C}\right) \\
&\textstyle\quad=s_1+s_2\sqrt{\frac{n}{s_1}}+s_1+\sqrt{\frac{n_2}{s_2}}\left(\sqrt{\frac{s_2}{m}}\left(\sqrt{\frac{nm}{s_1}}+m\right)+\sqrt{n}\right) \\
&\textstyle\quad= s_1+s_2\sqrt{\frac{n}{s_1}}+\frac{n}{\sqrt{s_1}}+\sqrt{n_2\frac{s_1^2}{n}}+\frac{n}{\sqrt{s_2}}
=\tilde{O}\left(s_1+s_2\sqrt{\frac{n}{s_1}}+\frac{n}{\sqrt{s_1}}+\frac{n}{\sqrt{s_2}}\right),
\end{align*}
which is still optimized by setting $s_1=n^{5/7}$ and $s_2=n^{4/7}$, 
giving time complexity $\tilde{O}(n^{5/7})$.

\section{Conclusion and Future Directions}

We have shown that the quantum walk search framework of \cite{MNRS11} can be extended to allow a data function that depends on both the vertex and the coin, provided certain costs are accounted for. This extension allows us to implement nested updates, although there may be other applications of this new framework, as it more generally allows us to consider updates with garbage resulting from any type of update subroutine.  Nested updates provide another tool for quantum walk algorithms analogous to the nested checking of \cite{JKM12}, and we hope that these tools will facilitate further upper bounds on both time and query complexity. 

It remains an open problem to improve the $\tilde{O}(n^{\frac{k}{k+1}})$ time complexity upper bound for $k$-Distinctness for $k>3$. The $3$-Distinctness upper bound of \cite{bel13} can be extended to a general $k$-Distinctness upper bound by coming up with an efficient procedure for constructing a starting state that generalizes our $\ket{\pi}_D^0$ \cite{bel13}. Efficiently constructing this generalized starting state would be a necessary, but not sufficient, condition for generalizing our upper bound to $k>3$. 

\section{Acknowledgments}

We thank Aleksandrs Belovs for helpful discussions about our two different approaches to this problem.

\bibliographystyle{alpha}
\bibliography{refs}
\end{document}